\tikzstyle{env}=[copoint,regular polygon rotate=0,minimum width=0.2cm, fill=black]
\tikzstyle{every picture}=[baseline=-0.25em]
\tikzstyle{dotpic}=[scale=0.5]
\tikzstyle{diredges}=[every to/.style={diredge}]
\tikzstyle{dot graph}=[shorten <=-0.1mm,shorten >=-0.1mm,scale=0.6]
\tikzstyle{plot point}=[circle,fill=black,minimum width=2mm,inner sep=0]
\tikzstyle{braceedge}=[decorate,decoration={brace,amplitude=2mm,raise=-1mm}]
\tikzstyle{small braceedge}=[decorate,decoration={brace,amplitude=1mm,raise=-1mm}]
\tikzstyle{left hook arrow}=[left hook-latex]
\tikzstyle{right hook arrow}=[right hook-latex]
\tikzstyle{black dot}=[inner sep=0.7mm,minimum width=0pt,minimum height=0pt,fill=black,draw=black,shape=circle]
\tikzstyle{dot}=[black dot]
\tikzstyle{smalldot}=[inner sep=0.4mm,minimum width=0pt,minimum height=0pt,fill=black,draw=black,shape=circle]
\tikzstyle{white dot}=[dot,fill=white]
\tikzstyle{antipode}=[white dot,inner sep=0.3mm,font=\footnotesize]
\tikzstyle{smallwhitedot}=[smalldot,fill=white]
\tikzstyle{alt white dot}=[white dot,label={[xshift=3.07mm,yshift=-0.05mm,font=\footnotesize]left:$*$}]
\tikzstyle{gray dot}=[dot,fill=gray!40!white]
\tikzstyle{smallgraydot}=[smalldot,fill=gray!40!white]
\tikzstyle{box vertex}=[draw=black,rectangle]
\tikzstyle{small box}=[box vertex,fill=white]
\tikzstyle{whitebg}=[fill=white,inner sep=2pt]
\tikzstyle{graph state vertex}=[sg vertex,fill=black]
\tikzstyle{wide copoint}=[fill=white,draw=black,shape=isosceles triangle,shape border rotate=90,isosceles triangle stretches=true,inner sep=1pt,minimum width=1.5cm,minimum height=5mm]
\tikzstyle{wide point}=[fill=white,draw=black,shape=isosceles triangle,shape border rotate=-90,isosceles triangle stretches=true,inner sep=1pt,minimum width=1.5cm,minimum height=4mm]
\tikzstyle{very wide copoint}=[fill=white,draw=black,shape=isosceles triangle,shape border rotate=-90,isosceles triangle stretches=true,inner sep=1pt,minimum width=2.5cm,minimum height=4mm]
\tikzstyle{very wide empty copoint}=[draw=black,shape=isosceles triangle,shape border rotate=-90,isosceles triangle stretches=true,inner sep=1pt,minimum width=2.5cm,minimum height=4mm]
\tikzstyle{symm}=[ultra thick,shorten <=-1mm,shorten >=-1mm]
\tikzstyle{square box}=[rectangle,fill=white,draw=black,minimum height=5mm,minimum width=5mm,font=\small]
\tikzstyle{square gray box}=[rectangle,fill=gray!30,draw=black,minimum height=6mm,minimum width=6mm]
\tikzstyle{copoint}=[regular polygon,regular polygon sides=3,draw=black,scale=0.75,inner sep=-0.5pt,minimum width=7mm,fill=white]
\tikzstyle{point}=[regular polygon,regular polygon sides=3,draw=black,scale=0.75,inner sep=-0.5pt,minimum width=7mm,fill=white,regular polygon rotate=180]
\tikzstyle{gray point}=[point,fill=gray!40!white]
\tikzstyle{gray copoint}=[copoint,fill=gray!40!white]
\newcommand{\edgearrow}{{\arrow[black]{>}}}
\newcommand{\edgetick}{{\arrow[black,scale=0.7,very thick]{|}}}
\tikzstyle{diredge}=[->]
\tikzstyle{rdiredge}=[<-]
\tikzstyle{medium diredge}=[->]
\tikzstyle{short diredge}=[->]
\tikzstyle{halfedge}=[-)]
\tikzstyle{other halfedge}=[(-]
\tikzstyle{freeedge}=[(-)]
\tikzstyle{white edge}=[line width=5pt,white]
\tikzstyle{tick}=[postaction=decorate,decoration={markings, mark=at position 0.5 with \edgetick}]
\tikzstyle{small map edge}=[|-latex, gray!60!blue, shorten <=0.9mm, shorten >=0.5mm]
\tikzstyle{thick dashed edge}=[very thick,dashed,gray!40]
\tikzstyle{map edge}=[|-latex,very thick, gray!40, shorten <=1mm, shorten >=0.5mm]
\tikzstyle{tickedge}=[postaction=decorate,
\tikzstyle{dirtickedge}=[postaction=decorate,
\tikzstyle{dirdoubletickedge}=[postaction=decorate,
\newcommand{\boxshape}[3]{%
\pgfdeclareshape{#1}{
\inheritsavedanchors[from=rectangle] 
\inheritanchorborder[from=rectangle]
\inheritanchor[from=rectangle]{center}
\inheritanchor[from=rectangle]{north}
\inheritanchor[from=rectangle]{south}
\inheritanchor[from=rectangle]{west}
\inheritanchor[from=rectangle]{east}
\backgroundpath{
\southwest \pgf@xa=\pgf@x \pgf@ya=\pgf@y
\northeast \pgf@xb=\pgf@x \pgf@yb=\pgf@y

\@tempdima=#2
\@tempdimb=#3

\pgfpathmoveto{\pgfpoint{\pgf@xa - 5pt + \@tempdima}{\pgf@ya}}
\pgfpathlineto{\pgfpoint{\pgf@xa - 5pt - \@tempdima}{\pgf@yb}}
\pgfpathlineto{\pgfpoint{\pgf@xb + 5pt + \@tempdimb}{\pgf@yb}}
\pgfpathlineto{\pgfpoint{\pgf@xb + 5pt - \@tempdimb}{\pgf@ya}}
\pgfpathlineto{\pgfpoint{\pgf@xa - 5pt + \@tempdima}{\pgf@ya}}
\pgfpathclose
}
}}
\tikzstyle{map}=[draw,shape=NEbox,inner sep=7pt]
\tikzstyle{mapdag}=[draw,shape=SEbox,inner sep=7pt]
\tikzstyle{maptrans}=[draw,shape=SWbox,inner sep=7pt]
\tikzstyle{mapconj}=[draw,shape=NWbox,inner sep=7pt]
\tikzstyle{probs}=[shape=semicircle,fill=gray!40!white,draw=black,shape border rotate=180,minimum width=1.2cm]
\tikzstyle{arrs}=[-latex,font=\small,auto]
\tikzstyle{arrow plain}=[arrs]
\tikzstyle{arrow dashed}=[dashed,arrs]
\tikzstyle{arrow bold}=[very thick,arrs]
\tikzstyle{arrow hide}=[draw=white!0,-]
\tikzstyle{arrow reverse}=[latex-]
\tikzstyle{cdnode}=[]
\tikzstyle{gn}=[dot,fill=green,minimum width=0.3cm,inner sep=0pt]
\tikzstyle{rn}=[dot,fill=red,inner sep=0pt,minimum width=0.3cm]
\tikzstyle{bn}=[dot,fill=blue,minimum width=0.3cm]
\tikzstyle{rc}=[dot,thick,fill=white,draw = red,minimum width=0.3cm,inner sep=0pt]
\tikzstyle{gc}=[dot,thick,fill=white,draw= green,inner sep=0pt,minimum width=0.3cm]
\tikzstyle{bc}=[dot,thick,fill=white,draw= blue,minimum width=0.3cm]
\tikzstyle{label}=[circle,fill=white,minimum width=0.3cm]
\tikzstyle{H box}=[rectangle,draw=black,xscale=1,yscale=1,font=\small,inner sep=0.75pt]
\tikzstyle{clocklabel}=[dot,fill=yellow,draw=black,font=\tiny,inner sep=0.75pt]
\tikzstyle{rsn}=[circle split,draw,fill=red,font=\tiny,inner sep=0.75pt]
\tikzstyle{gsn}=[circle split,draw,fill=green,font=\tiny,inner sep=0.75pt]
\tikzstyle{bsn}=[circle split,draw,fill=blue,font=\tiny,inner sep=0.75pt]
\tikzstyle{rsc}=[circle split,thick,draw= red,draw,fill=white,font=\tiny,inner sep=0.75pt]
\tikzstyle{gsc}=[circle split,thick,draw= green,draw,fill=white,font=\tiny,inner sep=0.75pt]
\tikzstyle{bsc}=[circle split,thick,draw= blue,draw,fill=white,font=\tiny,inner sep=0.75pt]
\tikzstyle{cnot}=[fill=white,shape=circle,inner sep=-1.4pt]
\tikzstyle{wire label}=[font=\tiny, auto]
\newcommand{\bra}[1]{\ensuremath{\left\langle #1 \right|}}
\newcommand{\ket}[1]{\ensuremath{\left|  #1 \right\rangle}}
\tikzstyle{cdiag}=[matrix of math nodes, row sep=3em, column sep=3em, text height=1.5ex, text depth=0.25ex,inner sep=0.5em]
\tikzstyle{arrow above}=[transform canvas={yshift=0.5ex}]
\tikzstyle{arrow below}=[transform canvas={yshift=-0.5ex}]
\newtheorem{Th}{Theorem}[section]
\newtheorem{theorem}[Th]{Theorem}
\newtheorem{proposition}[Th]{Proposition}
\newtheorem{lemma}[Th]{Lemma}
\newenvironment{proof}{\textbf{Proof:}}{\hfill$\Box$\newline}
\title{Qutrit Dichromatic Calculus and Its Universality}
\author{
Quanlong Wang \qquad\qquad Xiaoning Bian
\institute{School of Mathematics and Systems Science\\
Beihang University\\
Beijing, China}
\email{\quad qlwang@buaa.edu.cn \quad\qquad bianxiaoning@smss.buaa.edu.cn}
}
\begin{document}
\maketitle

\begin{abstract}
We introduce a dichromatic calculus (RG) for qutrit systems. We show that the decomposition of the qutrit Hadamard gate is non-unique and not derivable from the dichromatic calculus. As an application of the dichromatic calculus, we depict a quantum algorithm with a single qutrit. Since it is not easy to decompose an arbitrary $d\times d$ unitary matrix into Z and X phase gates when $d>2$, the proof of the universality of qudit ZX calculus for quantum mechanics is far from trivial. We construct a counterexample to Ranchin's universality proof, and give another proof by Lie theory that the qudit ZX calculus contains all single qudit unitary transformations, which implies that qudit ZX calculus, with qutrit dichromatic calculus as a special case, is universal for quantum mechanics.

\end{abstract}

\section{Introduction}

In \cite{CoeckeDuncan}, Coecke and Duncan developed dichromatic ZX-calculus for qubit systems. To extend the graphical calculus to higher dimensions, Ranchin considered qudit (d-dimensional quantum system) ZX-calculus \cite{Ranchin}. At almost the same time, the authors of this paper investigated the theory and application of qutrit ZX-calculus \cite{wangbian}. Unlike in \cite{Ranchin} and \cite{wangbian}, we introduce two new rules P1 and P2 in this paper. The necessity of these two rules is demonstrated by
depicting in dichromatic calculus the simplest quantum speed-up algorithm with a single qutrit \cite{Gedik}.

In the qubit case, Duncan and Perdrix \cite{RossPerdrix} proved that the Euler decomposition is not derivable from ZX calculus. In this paper, we  prove similarly that the decomposition of the qutrit Hadamard gate is non-unique and not derivable from a dichromatic qutrit ZX-calculus.

For any d-dimensional quantum system ($d\geq2$), universality is a very important problem for ZX calculus. This means that the qudit ZX calculus can express any quantum state and gate. To the best of our knowledge, it is not easy to decompose an arbitrary $d\times d$ unitary matrix into Z and X phase gates when $d>2$. Thus the proof of the universality of qudit ZX calculus for quantum mechanics is far from trivial. Due to Brylinski \cite{Bry}, to prove the universality of qudit ZX calculus, it suffices to prove that the qudit ZX calculus contains all single qudit unitary transformations. Such a proof given in \cite{Ranchin} is based on the fact \cite{Muth} that the d-dimensional phase gates $Z_d$ and  $X_d$ are sufficient to simulate all single qudit unitary transforms. For our understanding, only part of the whole family of $Z_d$ phase gates can be represented by $\Lambda_X$   phase gates (i.e., X phase gates)
 in \cite{Ranchin}. Actually, we have a counterexample that some $Z_d$ phase gates cannot be realized by $\Lambda_X$  only. Thus another proof that the qudit ZX calculus contains all single qudit unitary transformations is requested. We solve this problem by the method of Lie algebra. Therefore the qudit ZX calculus, with qutrit dichromatic calculus as a special case, is universal for quantum mechanics.

\section{Red and Green Graphs}
We fix some notations here. Let $\mathbf{FdHilb}$ be the symmetric
monoidal $\dag-$category(SM$\dag-$category) of
finite-dimensional complex Hilbert spaces and linear
maps between them. Let $\mathbf{FdHilb_p}$ be The SM$\dag-$category of
finite-dimensional complex Hilbert spaces and linear
maps modulo the relation $f\equiv g$ if $\exists z\in \mathbb{C},z\neq0:f=zg$. $\mathbf{FdHilb_Q}$ is defined as the full subcategory
of $\mathbf{FdHilb_p}$ generated by the objects $\{\underbrace{Q\otimes\cdots\otimes Q}_n\mid n\geq0$, where $Q:=\mathbb{C}^3$. This is
essentially the category of qutrits.
\subsection{$\mathbf{RG}$ category}
We define a category $\mathbf{RG}$ where the objects are $n$-fold
monoidal products of an object $*$, denoted $*^n (n\geq0)$.
In $\mathbf{RG}$, a morphism from $*^m$ to $*^n$ is a finite undirected open graph from $m$ wires to $n$ wires, built from

\begin{center}
\[
\delta_Z=%
\beginpgfgraphicnamed{RGgenerator/RGg_dz}
\begin{tikzpicture}
	\begin{pgfonlayer}{nodelayer}
		\node [style=none] (0) at (0, 0.5) {};
		\node [style=none] (1) at (0.25, -0.5) {};
		\node [style=none] (2) at (-0.25, -0.5) {};
		\node [style=gn] (3) at (0, -0) {};
	\end{pgfonlayer}
	\begin{pgfonlayer}{edgelayer}
		\draw (0.center) to (3);
		\draw (3) to (2.center);
		\draw (3) to (1.center);
	\end{pgfonlayer}
\end{tikzpicture}}
\endpgfgraphicnamed \qquad
\delta_Z^\dagger=%
\beginpgfgraphicnamed{RGgenerator/RGg_dzd}
\begin{tikzpicture}
	\begin{pgfonlayer}{nodelayer}
		\node [style=gn] (0) at (0, -0) {};
		\node [style=none] (1) at (0, -0.5) {};
		\node [style=none] (2) at (0.25, 0.5) {};
		\node [style=none] (3) at (-0.25, 0.5) {};
	\end{pgfonlayer}
	\begin{pgfonlayer}{edgelayer}
		\draw (0) to (1.center);
		\draw (0) to (2.center);
		\draw (0) to (3.center);
	\end{pgfonlayer}
\end{tikzpicture}}
\endpgfgraphicnamed \qquad
\epsilon_Z=%
\beginpgfgraphicnamed{RGgenerator/RGg_ez}
\begin{tikzpicture}
	\begin{pgfonlayer}{nodelayer}
		\node [style=none] (0) at (0, 0.5) {};
		\node [style=gn] (1) at (0, -0) {};
	\end{pgfonlayer}
	\begin{pgfonlayer}{edgelayer}
		\draw (0.center) to (1);
	\end{pgfonlayer}
\end{tikzpicture}}
\endpgfgraphicnamed \qquad
\epsilon_Z^\dagger=%
\beginpgfgraphicnamed{RGgenerator/RGg_ezd}
\begin{tikzpicture}
	\begin{pgfonlayer}{nodelayer}
		\node [style=none] (0) at (0, -0.5) {};
		\node [style=gn] (1) at (0, -0) {};
	\end{pgfonlayer}
	\begin{pgfonlayer}{edgelayer}
		\draw (0.center) to (1);
	\end{pgfonlayer}
\end{tikzpicture}}
\endpgfgraphicnamed \qquad
P_Z(\alpha,\beta)=%
\beginpgfgraphicnamed{RGgenerator/RGg_zph_ab}
\begin{tikzpicture}
	\begin{pgfonlayer}{nodelayer}
		\node [style=gsn] (0) at (0, -0) {$\alpha$\nodepart{lower}$\beta$};
		\node [style=none] (1) at (0, 0.5) {};
		\node [style=none] (2) at (0, -0.5) {};
	\end{pgfonlayer}
	\begin{pgfonlayer}{edgelayer}
		\draw (1.center) to (0);
		\draw (2.center) to (0);
	\end{pgfonlayer}
\end{tikzpicture}}
\endpgfgraphicnamed \qquad
H = %
\beginpgfgraphicnamed{RGgenerator/RGg_Hada}
\begin{tikzpicture}
	\begin{pgfonlayer}{nodelayer}
		\node [style={H box}] (0) at (0, -0) {$H$};
		\node [style=none] (1) at (0, 0.5) {};
		\node [style=none] (2) at (0, -0.5) {};
	\end{pgfonlayer}
	\begin{pgfonlayer}{edgelayer}
		\draw (1.center) to (0);
		\draw (2.center) to (0);
	\end{pgfonlayer}
\end{tikzpicture}}
\endpgfgraphicnamed
\]
\end{center}
\[
\delta_X=%
\beginpgfgraphicnamed{RGgenerator/RGg_dx}
\begin{tikzpicture}
	\begin{pgfonlayer}{nodelayer}
		\node [style=none] (0) at (0, 0.5) {};
		\node [style=none] (1) at (0.25, -0.5) {};
		\node [style=none] (2) at (-0.25, -0.5) {};
		\node [style=rn] (3) at (0, -0) {};
	\end{pgfonlayer}
	\begin{pgfonlayer}{edgelayer}
		\draw (0.center) to (3);
		\draw (3) to (2.center);
		\draw (3) to (1.center);
	\end{pgfonlayer}
\end{tikzpicture}}
\endpgfgraphicnamed \qquad
\delta_X^\dagger=%
\beginpgfgraphicnamed{RGgenerator/RGg_dxd}
\begin{tikzpicture}
	\begin{pgfonlayer}{nodelayer}
		\node [style=rn] (0) at (0, -0) {};
		\node [style=none] (1) at (0, -0.5) {};
		\node [style=none] (2) at (0.25, 0.5) {};
		\node [style=none] (3) at (-0.25, 0.5) {};
	\end{pgfonlayer}
	\begin{pgfonlayer}{edgelayer}
		\draw (0) to (1.center);
		\draw (0) to (2.center);
		\draw (0) to (3.center);
	\end{pgfonlayer}
\end{tikzpicture}}
\endpgfgraphicnamed \qquad
\epsilon_X=%
\beginpgfgraphicnamed{RGgenerator/RGg_ex}
\begin{tikzpicture}
	\begin{pgfonlayer}{nodelayer}
		\node [style=none] (0) at (0, 0.5) {};
		\node [style=rn] (1) at (0, -0) {};
	\end{pgfonlayer}
	\begin{pgfonlayer}{edgelayer}
		\draw (0.center) to (1);
	\end{pgfonlayer}
\end{tikzpicture}}
\endpgfgraphicnamed \qquad
\epsilon_X^\dagger=%
\beginpgfgraphicnamed{RGgenerator/RGg_exd}
\begin{tikzpicture}
	\begin{pgfonlayer}{nodelayer}
		\node [style=none] (0) at (0, -0.5) {};
		\node [style=rn] (1) at (0, -0) {};
	\end{pgfonlayer}
	\begin{pgfonlayer}{edgelayer}
		\draw (0.center) to (1);
	\end{pgfonlayer}
\end{tikzpicture}}
\endpgfgraphicnamed \qquad
P_X(\alpha,\beta)=%
\beginpgfgraphicnamed{RGgenerator/RGg_xph_ab}
\begin{tikzpicture}
	\begin{pgfonlayer}{nodelayer}
		\node [style=rsn] (0) at (0, -0) {$\alpha$\nodepart{lower}$\beta$};
		\node [style=none] (1) at (0, 0.5) {};
		\node [style=none] (2) at (0, -0.5) {};
	\end{pgfonlayer}
	\begin{pgfonlayer}{edgelayer}
		\draw (1.center) to (0);
		\draw (2.center) to (0);
	\end{pgfonlayer}
\end{tikzpicture}}
\endpgfgraphicnamed \qquad
H^{\dag}= %
\beginpgfgraphicnamed{RGgenerator/RGg_Hadad}
\begin{tikzpicture}
	\begin{pgfonlayer}{nodelayer}
		\node [style={H box}] (0) at (0, -0) {$H^\dagger$};
		\node [style=none] (1) at (0, -0.5) {};
		\node [style=none] (2) at (0, 0.5) {};
	\end{pgfonlayer}
	\begin{pgfonlayer}{edgelayer}
		\draw (2.center) to (0);
		\draw (1.center) to (0);
	\end{pgfonlayer}
\end{tikzpicture}}
\endpgfgraphicnamed
\]


where $\alpha, \beta\in [0,2\pi)$. For convenience, we denote the
frequently used angles $\frac{2 \pi}{3}$ and   $\frac{4 \pi}{3}$ by $1$ and $2$ respectively.
The generator H is called a Hadamard gate.
Additionally, the identity morphism on $*$ is represented as the straight wire.
 Composition is connecting up the edges, while tensor is simply putting two diagrams side by
side. We also mention here that
we ignore connected components of a graph which
are connected to neither input nor output. This is in
order to not have to deal with scalars.

$\mathbf{RG}$ morphisms are also subject to the equations depicted
below.

\begin{figure}[!h]
\begin{center}
\[
\quad \qquad\begin{array}{|cccc|}
\hline
\beginpgfgraphicnamed{RGrelations/spider}
\begin{tikzpicture}[font={\footnotesize}]
	\begin{pgfonlayer}{nodelayer}
		\node [style=none] (0) at (-1.25, -0) {\rotatebox[origin=c]{45}{...}};
		\node [style=none] (1) at (0.25, -0) {$=$};
		\node [style=gsn] (2) at (1.5, 0) {\tiny $\alpha+\eta$\nodepart{lower}\tiny $\beta+\theta$};
		\node [style=gsn] (3) at (-0.75, -0.25) {$\eta$\nodepart{lower}$\theta$};
		\node [style=none] (4) at (-1.75, -0.5) {\raisebox{2mm}{...}};
		\node [style=none] (5) at (2, -0.75) {};
		\node [style=none] (6) at (-1, -0.75) {};
		\node [style=none] (7) at (1.5, -0.75) {\raisebox{2mm}{...}};
		\node [style=none] (8) at (-0.5, -0.75) {};
		\node [style=none] (9) at (1, -0.75) {};
		\node [style=none] (10) at (-2, -0.5) {};
		\node [style=none] (11) at (-1.5, -0.5) {};
		\node [style=none] (12) at (-0.75, -0.75) {\raisebox{2mm}{...}};
		\node [style=none] (13) at (1.5, 0.75) {\raisebox{-2mm}{...}};
		\node [style=none] (14) at (1, 0.75) {};
		\node [style=none] (15) at (-2, 0.75) {};
		\node [style=none] (16) at (-0.5, 0.5) {};
		\node [style=none] (17) at (-1.5, 0.75) {};
		\node [style=none] (18) at (2, 0.75) {};
		\node [style=gsn] (19) at (-1.75, 0.25) {$\alpha$\nodepart{lower}$\beta$};
		\node [style=none] (20) at (-0.75, 0.5) {\raisebox{-2mm}{...}};
		\node [style=none] (21) at (-1, 0.5) {};
		\node [style=none] (22) at (-1.75, 0.75) {\raisebox{-2mm}{...}};
	\end{pgfonlayer}
	\begin{pgfonlayer}{edgelayer}
		\draw (3) to (16.center);
		\draw (3) to (6.center);
		\draw (3) to (8.center);
		\draw (19) to (10.center);
		\draw (19) to (11.center);
		\draw [bend right, looseness=1.00] (19) to (3);
		\draw [bend left, looseness=1.00] (19) to (3);
		\draw (14.center) to (2);
		\draw (2) to (9.center);
		\draw (5.center) to (2);
		\draw (2) to (18.center);
		\draw (19) to (15.center);
		\draw (19) to (17.center);
		\draw (3) to (21.center);
	\end{pgfonlayer}
\end{tikzpicture}}
\endpgfgraphicnamed&(S1)&%
\beginpgfgraphicnamed{RGrelations/s2}
\begin{tikzpicture}
	\begin{pgfonlayer}{nodelayer}
		\node [style=gn] (0) at (-0.75, -0) {};
		\node [style=none] (1) at (0, -0) {$:=$};
		\node [style=none] (2) at (1, -0.5) {};
		\node [style=none] (3) at (1, 0.5) {};
		\node [style=none] (4) at (-0.75, -0.5) {};
		\node [style=none] (5) at (-0.75, 0.5) {};
		\node [style=gsn] (6) at (1, -0) {\tiny $0$\nodepart{lower}\tiny $0$};
		\node [style=none] (7) at (2.75, 0.5) {};
		\node [style=none] (8) at (2.75, -0.5) {};
		\node [style=none] (9) at (2, -0) {$=$};
	\end{pgfonlayer}
	\begin{pgfonlayer}{edgelayer}
		\draw (6) to (2.center);
		\draw (6) to (3.center);
		\draw (0) to (4.center);
		\draw (0) to (5.center);
		\draw (7.center) to (8.center);
	\end{pgfonlayer}
\end{tikzpicture}}
\endpgfgraphicnamed&(S2)\\
\beginpgfgraphicnamed{RGrelations/b1}
\begin{tikzpicture}
	\begin{pgfonlayer}{nodelayer}
		\node [style=none] (0) at (0.75, -0.25) {};
		\node [style=none] (1) at (-1, -0.5) {};
		\node [style=none] (2) at (-0.5, -0.5) {};
		\node [style=none] (3) at (0, -0) {$=$};
		\node [style=rn] (4) at (1.25, 0.25) {};
		\node [style=rn] (5) at (-0.75, 0.5) {};
		\node [style=gn] (6) at (-0.75, -0) {};
		\node [style=none] (7) at (1.25, -0.25) {};
		\node [style=rn] (8) at (0.75, 0.25) {};
	\end{pgfonlayer}
	\begin{pgfonlayer}{edgelayer}
		\draw [style=none] (5) to (6);
		\draw [style=none] (6) to (1.center);
		\draw [style=none] (6) to (2.center);
		\draw [style=none] (8) to (0.center);
		\draw [style=none] (4) to (7.center);
	\end{pgfonlayer}
\end{tikzpicture}}
\endpgfgraphicnamed&(B1)&%
\beginpgfgraphicnamed{RGrelations/b2}
\begin{tikzpicture}
	\begin{pgfonlayer}{nodelayer}
		\node [style=none] (0) at (-1.75, 1) {};
		\node [style=rn] (1) at (1.25, 0.25) {};
		\node [style=gn] (2) at (-1, 0.5) {};
		\node [style=none] (3) at (-1.75, -0.75) {};
		\node [style=none] (4) at (1.5, -0.75) {};
		\node [style=none] (5) at (1, 0.75) {};
		\node [style=none] (6) at (0, -0) {$=$};
		\node [style=none] (7) at (1, -0.75) {};
		\node [style=gn] (8) at (-1.75, 0.5) {};
		\node [style=rn] (9) at (-1.75, -0.25) {};
		\node [style=none] (10) at (-1, 1) {};
		\node [style=none] (11) at (-1, -0.75) {};
		\node [style=none] (12) at (1.5, 0.75) {};
		\node [style=gn] (13) at (1.25, -0.25) {};
		\node [style=rn] (14) at (-1, -0.25) {};
	\end{pgfonlayer}
	\begin{pgfonlayer}{edgelayer}
		\draw [style=none] (11.center) to (14);
		\draw [style=none] (3.center) to (9);
		\draw [style=none] (2) to (10.center);
		\draw [style=none, bend right, looseness=1.00] (14) to (2);
		\draw [style=none] (8) to (0.center);
		\draw [style=none, bend left, looseness=1.00] (9) to (8);
		\draw [style=none] (4.center) to (13);
		\draw [style=none] (13) to (1);
		\draw [style=none] (1) to (5.center);
		\draw [style=none] (1) to (12.center);
		\draw (13) to (7.center);
		\draw (8) to (14);
		\draw (2) to (9);
	\end{pgfonlayer}
\end{tikzpicture}}
\endpgfgraphicnamed&(B2)\\
\multicolumn{3}{|c}{%
\beginpgfgraphicnamed{RGrelations/k1}
\begin{tikzpicture}
	\begin{pgfonlayer}{nodelayer}
		\node [style=none] (0) at (0.75, 1) {};
		\node [style=rn] (1) at (-3.25, -0) {};
		\node [style=gsn] (2) at (2.5, -0) {\tiny $2$\nodepart{lower}\tiny $1$};
		\node [style=rn] (3) at (2.75, 0.5) {};
		\node [style=none] (4) at (0.5, -0.5) {};
		\node [style=none] (5) at (-1.5, -0.5) {};
		\node [style=none] (6) at (-2.5, 0.25) {$=$};
		\node [style=none] (7) at (1, -0.5) {};
		\node [style=gsn] (8) at (-1.5, -0) {\tiny $1$\nodepart{lower}\tiny $2$};
		\node [style=gsn] (9) at (0.75, 0.5) {\tiny $2$\nodepart{lower}\tiny $1$};
		\node [style=none] (10) at (3, -0.5) {};
		\node [style=none] (11) at (1.5, 0.25) {$=$};
		\node [style=gsn] (12) at (-1, -0) {\tiny $1$\nodepart{lower}\tiny $2$};
		\node [style=none] (13) at (-1, -0.5) {};
		\node [style=gsn] (14) at (-3.25, 0.5) {\tiny $1$\nodepart{lower}\tiny $2$};
		\node [style=gsn] (15) at (3, -0) {\tiny $2$\nodepart{lower}\tiny $1$};
		\node [style=none] (16) at (-3.25, 1) {};
		\node [style=rn] (17) at (0.75, -0) {};
		\node [style=none] (18) at (-1.25, 1) {};
		\node [style=none] (19) at (-3, -0.5) {};
		\node [style=none] (20) at (2.5, -0.5) {};
		\node [style=rn] (21) at (-1.25, 0.5) {};
		\node [style=none] (22) at (-3.5, -0.5) {};
		\node [style=none] (23) at (2.75, 1) {};
	\end{pgfonlayer}
	\begin{pgfonlayer}{edgelayer}
		\draw [style=none] (1) to (22.center);
		\draw [style=none] (1) to (19.center);
		\draw [style=none] (18.center) to (21);
		\draw [style=none] (17) to (4.center);
		\draw [style=none] (17) to (7.center);
		\draw [style=none] (23.center) to (3);
		\draw [style=none] (16.center) to (14);
		\draw [style=none] (14) to (1);
		\draw [style=none] (21) to (8);
		\draw [style=none] (8) to (5.center);
		\draw [style=none] (21) to (12);
		\draw [style=none] (12) to (13.center);
		\draw [style=none] (0.center) to (9);
		\draw [style=none] (9) to (17);
		\draw [style=none] (3) to (2);
		\draw [style=none] (2) to (20.center);
		\draw [style=none] (3) to (15);
		\draw [style=none] (15) to (10.center);
	\end{pgfonlayer}
\end{tikzpicture}}
\endpgfgraphicnamed}&(K1)\\
\multicolumn{3}{|c}{%
\beginpgfgraphicnamed{RGrelations/k2}
\begin{tikzpicture}
	\begin{pgfonlayer}{nodelayer}
		\node [style=gsn] (0) at (-4, 0.5) {\tiny $1$\nodepart{lower}\tiny $2$};
		\node [style=rsn] (1) at (-4, -0.25) { $\alpha$\nodepart{lower} $\beta$};
		\node [style=none] (2) at (-3.5, -0) {$=$};
		\node [style=none] (3) at (-4, 1) {};
		\node [style=none] (4) at (-4, -0.75) {};
		\node [style=none] (5) at (-3, -0.75) {};
		\node [style=none] (6) at (-3, 1) {};
		\node [style=gsn] (7) at (-3, -0.25) {\tiny $1$\nodepart{lower}\tiny $2$};
		\node [style=none] (8) at (-1.75, 1) {};
		\node [style=none] (9) at (-1.75, -0.75) {};
		\node [style=none] (10) at (-1.25, -0) {$=$};
		\node [style=gsn] (11) at (-0.75, -0.25) {\tiny $2$\nodepart{lower}\tiny $1$};
		\node [style=none] (12) at (-0.75, -0.75) {};
		\node [style=none] (13) at (-0.75, 1) {};
		\node [style=rsn] (14) at (-1.75, -0.25) { $\alpha$\nodepart{lower} $\beta$};
		\node [style=gsn] (15) at (-1.75, 0.5) {\tiny $2$\nodepart{lower}\tiny $1$};
		\node [style=rsn] (16) at (-3, 0.5) {\tiny $\beta\textnormal{-}\alpha$\nodepart{lower}\tiny $\textnormal{-}\alpha$};
		\node [style=rsn] (17) at (-0.75, 0.5) {\tiny $\textnormal{-}\beta$\nodepart{lower}\tiny $\alpha\textnormal{-}\beta$};
		\node [style=none] (18) at (3.75, -0.75) {};
		\node [style=none] (19) at (1, -0) {$=$};
		\node [style=none] (20) at (0.5, 1) {};
		\node [style=none] (21) at (2.75, 1) {};
		\node [style=none] (22) at (0.5, -0.75) {};
		\node [style=rsn] (23) at (1.5, -0.25) {\tiny $1$\nodepart{lower}\tiny $2$};
		\node [style=none] (24) at (3.25, -0) {$=$};
		\node [style=rsn] (25) at (2.75, 0.5) {\tiny $2$\nodepart{lower}\tiny $1$};
		\node [style=gsn] (26) at (0.5, -0.25) { $\alpha$\nodepart{lower} $\beta$};
		\node [style=none] (27) at (2.75, -0.75) {};
		\node [style=rsn] (28) at (0.5, 0.5) {\tiny $1$\nodepart{lower}\tiny $2$};
		\node [style=none] (29) at (1.5, -0.75) {};
		\node [style=gsn] (30) at (2.75, -0.25) { $\alpha$\nodepart{lower} $\beta$};
		\node [style=rsn] (31) at (3.75, -0.25) {\tiny $2$\nodepart{lower}\tiny $1$};
		\node [style=gsn] (32) at (3.75, 0.5) {\tiny $\beta\textnormal{-}\alpha$\nodepart{lower}\tiny $\textnormal{-}\alpha$};
		\node [style=none] (33) at (3.75, 1) {};
		\node [style=none] (34) at (1.5, 1) {};
		\node [style=gsn] (35) at (1.5, 0.5) {\tiny $\textnormal{-}\beta$\nodepart{lower}\tiny $\alpha\textnormal{-}\beta$};
	\end{pgfonlayer}
	\begin{pgfonlayer}{edgelayer}
		\draw (3.center) to (0);
		\draw (0) to (1);
		\draw (1) to (4.center);
		\draw (7) to (5.center);
		\draw (8.center) to (15);
		\draw (15) to (14);
		\draw (14) to (9.center);
		\draw (11) to (12.center);
		\draw (6.center) to (16);
		\draw (16) to (7);
		\draw (13.center) to (17);
		\draw (17) to (11);
		\draw (20.center) to (28);
		\draw (28) to (26);
		\draw (26) to (22.center);
		\draw (23) to (29.center);
		\draw (21.center) to (25);
		\draw (25) to (30);
		\draw (30) to (27.center);
		\draw (31) to (18.center);
		\draw (33.center) to (32);
		\draw (34.center) to (35);
		\draw (35) to (23);
		\draw (32) to (31);
	\end{pgfonlayer}
\end{tikzpicture}}
\endpgfgraphicnamed}&(K2)\\
\beginpgfgraphicnamed{RGrelations/h1}
\begin{tikzpicture}
	\begin{pgfonlayer}{nodelayer}
		\node [style={H box}] (0) at (0, 0.25) {$H^\dagger$};
		\node [style=none] (1) at (-1, 0.75) {};
		\node [style=none] (2) at (1, 0.5) {};
		\node [style=none] (3) at (0, 0.75) {};
		\node [style={H box}] (4) at (-1, -0.25) {$H^\dagger$};
		\node [style=none] (5) at (0.5, -0) {$=$};
		\node [style={H box}] (6) at (-1, 0.25) {$H$};
		\node [style=none] (7) at (0, -0.75) {};
		\node [style=none] (8) at (1, -0.5) {};
		\node [style=none] (9) at (-1, -0.75) {};
		\node [style=none] (10) at (-0.5, -0) {$=$};
		\node [style={H box}] (11) at (0, -0.25) {$H$};
	\end{pgfonlayer}
	\begin{pgfonlayer}{edgelayer}
		\draw (3.center) to (0);
		\draw (7.center) to (11);
		\draw (11) to (0);
		\draw (8.center) to (2.center);
		\draw (1.center) to (6);
		\draw (9.center) to (4);
		\draw (4) to (6);
	\end{pgfonlayer}
\end{tikzpicture}}
\endpgfgraphicnamed&(H1)&%
\beginpgfgraphicnamed{RGrelations/h2}
\begin{tikzpicture}
	\begin{pgfonlayer}{nodelayer}
		\node [style=none] (0) at (1, 0.75) {};
		\node [style=none] (1) at (-1.25, -1) {};
		\node [style={H box}] (2) at (-1.25, 0.5) {$H$};
		\node [style=none] (3) at (-1.75, -0.75) {\raisebox{2mm}{...}};
		\node [style=none] (4) at (1.5, -0.75) {\raisebox{2mm}{...}};
		\node [style=none] (5) at (1.5, 0.75) {\raisebox{-2mm}{...}};
		\node [style=rsn] (6) at (-1.75, -0) {$\alpha$\nodepart{lower}$\beta$};
		\node [style={H box}] (7) at (-2.25, -0.5) {$H^\dagger$};
		\node [style=none] (8) at (-1.25, 1) {};
		\node [style={H box}] (9) at (-1.25, -0.5) {$H^\dagger$};
		\node [style=none] (10) at (-1.75, 0.75) {\raisebox{-2mm}{...}};
		\node [style=none] (11) at (0, -0) {$=$};
		\node [style=gsn] (12) at (1.5, -0) {$\alpha$\nodepart{lower}$\beta$};
		\node [style=none] (13) at (2, -0.75) {};
		\node [style=none] (14) at (2, 0.75) {};
		\node [style=none] (15) at (-2.25, 1) {};
		\node [style={H box}] (16) at (-2.25, 0.5) {$H$};
		\node [style=none] (17) at (-2.25, -1) {};
		\node [style=none] (18) at (1, -0.75) {};
	\end{pgfonlayer}
	\begin{pgfonlayer}{edgelayer}
		\draw (6) to (7);
		\draw (6) to (9);
		\draw (9) to (1.center);
		\draw (7) to (17.center);
		\draw (6) to (16);
		\draw (6) to (2);
		\draw (2) to (8.center);
		\draw (16) to (15.center);
		\draw (12) to (0.center);
		\draw (12) to (14.center);
		\draw (12) to (18.center);
		\draw (12) to (13.center);
	\end{pgfonlayer}
\end{tikzpicture}}
\endpgfgraphicnamed&(H2)\\
\beginpgfgraphicnamed{RGrelations/p1s}
\begin{tikzpicture}
	\begin{pgfonlayer}{nodelayer}
		\node [style=none] (0) at (0.5, -0.5) {};
		\node [style={H box}] (1) at (-1.25, -0) {$D$};
		\node [style=none] (2) at (-0.5, 0.5) {};
		\node [style=rn] (3) at (0.25, 0.25) {};
		\node [style=gn] (4) at (-0.25, -0.25) {};
		\node [style=none] (5) at (0.75, -0) {$=$};
		\node [style=none] (6) at (-1.25, 0.5) {};
		\node [style=none] (7) at (-0.75, -0) {$:=$};
		\node [style=none] (8) at (-1.25, -0.5) {};
		\node [style=none] (9) at (1.25, 0.75) {};
		\node [style=gn] (10) at (1.25, 0.25) {};
		\node [style=rn] (11) at (1.25, -0.25) {};
		\node [style=none] (12) at (1.25, -0.75) {};
	\end{pgfonlayer}
	\begin{pgfonlayer}{edgelayer}
		\draw [bend left=15, looseness=1.00] (4) to (2.center);
		\draw (6.center) to (1);
		\draw (10) to (9.center);
		\draw (1) to (8.center);
		\draw (4) to (3);
		\draw [bend left=15, looseness=1.00] (3) to (0.center);
		\draw (11) to (12.center);
		\draw [bend right=45, looseness=1.00] (10) to (11);
		\draw [bend left=45, looseness=1.00] (10) to (11);
	\end{pgfonlayer}
\end{tikzpicture}}
\endpgfgraphicnamed&(P1)&%
\beginpgfgraphicnamed{RGrelations/p2s}
\begin{tikzpicture}[font={\footnotesize}]
	\begin{pgfonlayer}{nodelayer}
		\node [style={H box}] (0) at (-0.25, -0.5) {$D$};
		\node [style=none] (1) at (0.25, -0.5) {};
		\node [style=none] (2) at (-0.5, 1) {};
		\node [style=none] (3) at (-1.25, -0.5) {};
		\node [style={H box}] (4) at (-2.25, 0.25) {$D$};
		\node [style=gsn] (5) at (1.25, -0.25) {$\alpha$\nodepart{lower}$\beta$};
		\node [style=none] (6) at (1.75, -0) {$=$};
		\node [style=none] (7) at (0.5, 0.5) {};
		\node [style={H box}] (8) at (-0.75, -0.5) {$D$};
		\node [style=none] (9) at (-0.75, -1) {};
		\node [style=none] (10) at (-2.25, 0.75) {};
		\node [style=gn] (11) at (-0.5, -0) {};
		\node [style=none] (12) at (-1.75, -0) {$=$};
		\node [style=none] (13) at (0.75, -0.5) {};
		\node [style=none] (14) at (0, -0) {$=$};
		\node [style=none] (15) at (-1.25, 0.5) {};
		\node [style=none] (16) at (-2.25, -0.75) {};
		\node [style={H box}] (17) at (1.25, 0.25) {$D$};
		\node [style=none] (18) at (-0.25, -1) {};
		\node [style={H box}] (19) at (-0.5, 0.5) {$D$};
		\node [style=none] (20) at (1.25, 0.75) {};
		\node [style={H box}] (21) at (-2.25, -0.25) {$D$};
		\node [style=gn] (22) at (0.5, -0) {};
		\node [style=none] (23) at (1.25, -0.75) {};
		\node [style=none] (24) at (2.25, 0.75) {};
		\node [style=gsn] (25) at (2.25, 0.25) {$\beta$\nodepart{lower}$\alpha$};
		\node [style={H box}] (26) at (2.25, -0.25) {$D$};
		\node [style=none] (27) at (2.25, -0.75) {};
	\end{pgfonlayer}
	\begin{pgfonlayer}{edgelayer}
		\draw (3.center) to (15.center);
		\draw (11) to (19);
		\draw (19) to (2.center);
		\draw (11) to (8);
		\draw (8) to (9.center);
		\draw (11) to (0);
		\draw (0) to (18.center);
		\draw (22) to (7.center);
		\draw (22) to (1.center);
		\draw (22) to (13.center);
		\draw (5) to (17);
		\draw (17) to (20.center);
		\draw (10.center) to (4);
		\draw (16.center) to (21);
		\draw (21) to (4);
		\draw (5) to (23.center);
		\draw (25) to (26);
		\draw (26) to (27.center);
		\draw (25) to (24.center);
	\end{pgfonlayer}
\end{tikzpicture}}
\endpgfgraphicnamed&(P2)\\
\hline
\end{array}\]
\end{center}

  \caption{RG rules}\label{figure1}
\end{figure}
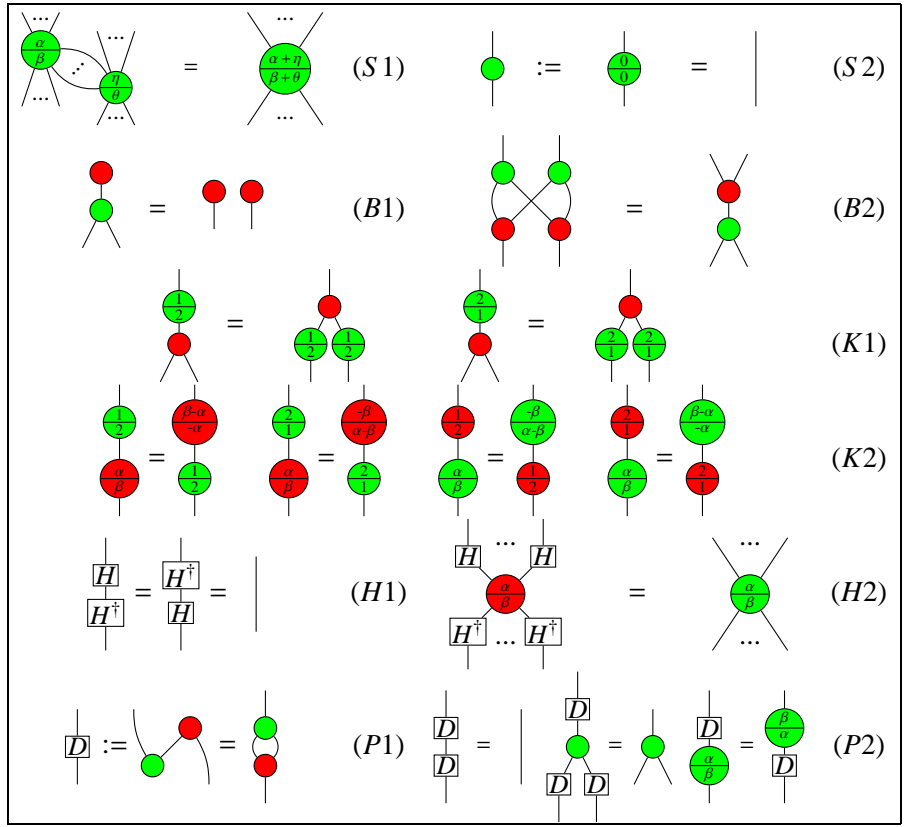

\begin{enumerate}
    \item Equations in Figure\ref{figure1}.
 \item  All equations hold under flip of graphs,  negation of angles, and exchange of $H$ and $H^\dagger$.\label{dagger}
  \item  All equations hold under flip of colours (except for rules $K2$ and $H2$).
\end{enumerate}

The equations below can be derived from the rules
of $\mathbf{RG}$ given above. They are very useful when demonstrating  some
more complex equalities in describing quantum protocols \cite{wangbian} and algorithms\cite{Gedik}.

\begin{center}
\[
\quad \qquad\begin{tabular}{cc}\label{derivedRules}
\beginpgfgraphicnamed{RGrelations/deri1_3break}
\begin{tikzpicture}
	\begin{pgfonlayer}{nodelayer}
		\node [style=rn] (0) at (0.75, -0.25) {};
		\node [style=none] (1) at (-0.75, 1) {};
		\node [style=none] (2) at (0, -0) {$=$};
		\node [style=none] (3) at (0.75, -0.75) {};
		\node [style=gn] (4) at (-0.75, 0.5) {};
		\node [style=none] (5) at (0.75, 1) {};
		\node [style=none] (6) at (-0.75, -0.75) {};
		\node [style=gn] (7) at (0.75, 0.5) {};
		\node [style=rn] (8) at (-0.75, -0.25) {};
	\end{pgfonlayer}
	\begin{pgfonlayer}{edgelayer}
		\draw [style=none, in=120, out=-120, looseness=1.25] (4) to (8);
		\draw [style=none, bend right, looseness=1.25] (8) to (4);
		\draw [style=none] (4) to (1.center);
		\draw [style=none] (6.center) to (8);
		\draw [style=none] (3.center) to (0);
		\draw [style=none] (7) to (5.center);
		\draw [style=none] (4) to (8);
	\end{pgfonlayer}
\end{tikzpicture}}
\endpgfgraphicnamed&(1)\\
\beginpgfgraphicnamed{RGrelations/deri2_rotate}
\begin{tikzpicture}
	\begin{pgfonlayer}{nodelayer}
		\node [style=none] (0) at (-0.75, 0.25) {};
		\node [style=none] (1) at (-2.25, 0.25) {};
		\node [style=none] (2) at (-1.5, -0) {$=$};
		\node [style=gn] (3) at (-2.75, 0.25) {};
		\node [style=gn] (4) at (-0.75, -0.25) {};
		\node [style=none] (5) at (2.25, -0) {$=$};
		\node [style=gsn] (6) at (1, 0.25) {$\alpha$\nodepart{lower}$\beta$};
		\node [style=gsn] (7) at (3.5, 0.25) {$\beta$\nodepart{lower}$\alpha$};
		\node [style=none] (8) at (3, 0.25) {};
		\node [style=none] (9) at (1.5, 0.25) {};
		\node [style=none] (10) at (1, 0.75) {};
		\node [style=none] (11) at (3, 0.75) {};
		\node [style=none] (12) at (1.5, 0.75) {};
		\node [style=none] (13) at (3.5, 0.75) {};
		\node [style=rn] (14) at (3.25, -0.25) {};
		\node [style=rn] (15) at (-2.5, -0.25) {};
		\node [style=rn] (16) at (1.25, -0.25) {};
	\end{pgfonlayer}
	\begin{pgfonlayer}{edgelayer}
		\draw (0.center) to (4);
		\draw (10.center) to (6);
		\draw (12.center) to (9.center);
		\draw (11.center) to (8.center);
		\draw (13.center) to (7);
		\draw (3) to (15);
		\draw (15) to (1.center);
		\draw (6) to (16);
		\draw (16) to (9.center);
		\draw (8.center) to (14);
		\draw (14) to (7);
	\end{pgfonlayer}
\end{tikzpicture}}
\endpgfgraphicnamed&(2)\\
\beginpgfgraphicnamed{RGrelations/deri3_4duliser}
\begin{tikzpicture}
	\begin{pgfonlayer}{nodelayer}
		\node [style=none] (0) at (-4.25, 0.5) {};
		\node [style=none] (1) at (-3.25, -0.5) {};
		\node [style=gn] (2) at (-4, -0.25) {};
		\node [style=none] (3) at (-2.5, -0) {$=$};
		\node [style=rn] (4) at (-3.5, 0.25) {};
		\node [style=none] (5) at (-1.75, 0.5) {};
		\node [style=none] (6) at (-0.75, -0.5) {};
		\node [style=rn] (7) at (-1.5, -0.25) {};
		\node [style=none] (8) at (0, -0) {$=$};
		\node [style=gn] (9) at (-1, 0.25) {};
		\node [style=none] (10) at (1.75, 0.5) {};
		\node [style=none] (11) at (0.75, -0.5) {};
		\node [style=gn] (12) at (1.5, -0.25) {};
		\node [style=rn] (13) at (1, 0.25) {};
		\node [style=none] (14) at (4.25, 0.5) {};
		\node [style=gn] (15) at (3.5, 0.25) {};
		\node [style=none] (16) at (3.25, -0.5) {};
		\node [style=none] (17) at (2.5, -0) {$=$};
		\node [style=rn] (18) at (4, -0.25) {};
	\end{pgfonlayer}
	\begin{pgfonlayer}{edgelayer}
		\draw [bend left=15, looseness=1.00] (2) to (0.center);
		\draw (2) to (4);
		\draw [bend left=15, looseness=1.00] (4) to (1.center);
		\draw [bend left=15, looseness=1.00] (7) to (5.center);
		\draw (7) to (9);
		\draw [bend left=15, looseness=1.00] (9) to (6.center);
		\draw [bend right, looseness=1.00] (12) to (10.center);
		\draw (12) to (13);
		\draw [bend right, looseness=1.00] (13) to (11.center);
		\draw [bend right, looseness=1.00] (18) to (14.center);
		\draw (18) to (15);
		\draw [bend right, looseness=1.00] (15) to (16.center);
	\end{pgfonlayer}
\end{tikzpicture}}
\endpgfgraphicnamed&(3)\\
\beginpgfgraphicnamed{RGrelations/deri4_ed}
\begin{tikzpicture}
	\begin{pgfonlayer}{nodelayer}
		\node [style=none] (0) at (-2.25, 0.75) {};
		\node [style={H box}] (1) at (-2.25, 0.25) {$D$};
		\node [style=none] (2) at (-0.75, 0.5) {};
		\node [style=none] (3) at (-1.5, -0) {$=$};
		\node [style=gn] (4) at (-0.75, -0) {};
		\node [style=gn] (5) at (-2.25, -0.25) {};
		\node [style={H box}] (6) at (0.75, 0.25) {$D$};
		\node [style=rn] (7) at (2.25, -0) {};
		\node [style=rn] (8) at (0.75, -0.25) {};
		\node [style=none] (9) at (1.5, -0) {$=$};
		\node [style=none] (10) at (0.75, 0.75) {};
		\node [style=none] (11) at (2.25, 0.5) {};
	\end{pgfonlayer}
	\begin{pgfonlayer}{edgelayer}
		\draw (5) to (1);
		\draw (1) to (0.center);
		\draw (4) to (2.center);
		\draw (8) to (6);
		\draw (6) to (10.center);
		\draw (7) to (11.center);
	\end{pgfonlayer}
\end{tikzpicture}}
\endpgfgraphicnamed&(4)
\end{tabular}\]
\end{center}

It is worth noting that there are some remarkable differences between qutrit rules  and qubit rules. First, in qubit case we have%
\beginpgfgraphicnamed{RGrelations/duliserE1}
\begin{tikzpicture}
	\begin{pgfonlayer}{nodelayer}
		\node [style=none] (0) at (-0.25, -0.5) {};
		\node [style=none] (1) at (-1, 0.5) {};
		\node [style=rn] (2) at (-0.5, 0.25) {};
		\node [style=none] (3) at (0, -0) {$=$};
		\node [style=none] (4) at (0.5, 0.5) {};
		\node [style=gn] (5) at (-0.75, -0.25) {};
		\node [style=none] (6) at (0.5, -0.5) {};
	\end{pgfonlayer}
	\begin{pgfonlayer}{edgelayer}
		\draw [bend left=15, looseness=1.00] (5) to (1.center);
		\draw (5) to (2);
		\draw [bend left=15, looseness=1.00] (2) to (0.center);
		\draw (4.center) to (6.center);
	\end{pgfonlayer}
\end{tikzpicture}}
\endpgfgraphicnamed, while in qutrit case we have%
\beginpgfgraphicnamed{RGrelations/duliserE2}
\begin{tikzpicture}
	\begin{pgfonlayer}{nodelayer}
		\node [style=rn] (0) at (0.5, -0.25) {};
		\node [style=none] (1) at (-0.25, -0.5) {};
		\node [style=gn] (2) at (0.5, 0.25) {};
		\node [style=none] (3) at (0.5, -0.5) {};
		\node [style=gn] (4) at (-0.75, -0.25) {};
		\node [style=rn] (5) at (-0.5, 0.25) {};
		\node [style=none] (6) at (0.5, 0.5) {};
		\node [style=none] (7) at (0, -0) {$=$};
		\node [style=none] (8) at (-1, 0.5) {};
	\end{pgfonlayer}
	\begin{pgfonlayer}{edgelayer}
		\draw [bend left=15, looseness=1.00] (4) to (8.center);
		\draw (2) to (6.center);
		\draw (4) to (5);
		\draw [bend left=15, looseness=1.00] (5) to (1.center);
		\draw (0) to (3.center);
		\draw [bend right=45, looseness=1.00] (2) to (0);
		\draw [bend left=45, looseness=1.00] (2) to (0);
	\end{pgfonlayer}
\end{tikzpicture}}
\endpgfgraphicnamed. Second, the dualizer of the two observables Z and X is an even permutation, i.e., the identical permutation. And there is only one odd permutation  %
\beginpgfgraphicnamed{RGrelations/xpi_qubit}
\begin{tikzpicture}
	\begin{pgfonlayer}{nodelayer}
		\node [style=none] (0) at (0, -0.5) {};
		\node [style=rn] (1) at (0, -0) {$\pi$};
		\node [style=none] (2) at (0, 0.5) {};
	\end{pgfonlayer}
	\begin{pgfonlayer}{edgelayer}
		\draw (0.center) to (1);
		\draw (1) to (2.center);
	\end{pgfonlayer}
\end{tikzpicture}}
\endpgfgraphicnamed in qubit case such that

\ctikzfig{RGrelations/xpi_qubitEqu}

 While in qutrit case, the dualizer of Z and X is an odd permutation which satisfies rule $P2$. Third,  in qubit case the K2 rule still holds when flipping the colours, while it doesn't hold under flip of colours in qutrit case.

%

Now $\mathbf{RG}$ is a symmetric monoidal category, which can further
be made into a $\dagger-$ SMC by having $\dagger$ act on
the generators as follows:

\begin{center}
\[
\left(%
\beginpgfgraphicnamed{RGgenerator/RGg_ezd}
}
\endpgfgraphicnamed\right)^\dagger=%
\beginpgfgraphicnamed{RGgenerator/RGg_ez}
}
\endpgfgraphicnamed \qquad
\left(%
\beginpgfgraphicnamed{RGgenerator/RGg_ez}
}
\endpgfgraphicnamed\right)^\dagger=%
\beginpgfgraphicnamed{RGgenerator/RGg_ezd}
}
\endpgfgraphicnamed \qquad
\left(%
\beginpgfgraphicnamed{RGgenerator/RGg_dzd}
}
\endpgfgraphicnamed\right)^\dagger=%
\beginpgfgraphicnamed{RGgenerator/RGg_dz}
}
\endpgfgraphicnamed \qquad
\left(%
\beginpgfgraphicnamed{RGgenerator/RGg_dz}
}
\endpgfgraphicnamed\right)^\dagger=%
\beginpgfgraphicnamed{RGgenerator/RGg_dzd}
}
\endpgfgraphicnamed \qquad
\left(%
\beginpgfgraphicnamed{RGgenerator/RGg_zph_ab}
}
\endpgfgraphicnamed\right)^\dagger=%
\beginpgfgraphicnamed{RGstruLevel1/RGs_zph_ab_minus}
\begin{tikzpicture}
	\begin{pgfonlayer}{nodelayer}
		\node [style=none] (0) at (0, -0.5) {};
		\node [style=gsn] (1) at (0, -0) {$\textnormal{-}\alpha$\nodepart{lower}$\textnormal{-}\beta$};
		\node [style=none] (2) at (0, 0.5) {};
	\end{pgfonlayer}
	\begin{pgfonlayer}{edgelayer}
		\draw (2.center) to (1);
		\draw (0.center) to (1);
	\end{pgfonlayer}
\end{tikzpicture}
}
\endpgfgraphicnamed\qquad
\left(%
\beginpgfgraphicnamed{RGgenerator/RGg_Hada}
}
\endpgfgraphicnamed\right)^\dagger=%
\beginpgfgraphicnamed{RGgenerator/RGg_Hadad}
}
\endpgfgraphicnamed
\]

\[
\left(%
\beginpgfgraphicnamed{RGgenerator/RGg_exd}
}
\endpgfgraphicnamed\right)^\dagger=%
\beginpgfgraphicnamed{RGgenerator/RGg_ex}
}
\endpgfgraphicnamed \qquad
\left(%
\beginpgfgraphicnamed{RGgenerator/RGg_ex}
}
\endpgfgraphicnamed\right)^\dagger=%
\beginpgfgraphicnamed{RGgenerator/RGg_exd}
}
\endpgfgraphicnamed \qquad
\left(%
\beginpgfgraphicnamed{RGgenerator/RGg_dxd}
}
\endpgfgraphicnamed\right)^\dagger=%
\beginpgfgraphicnamed{RGgenerator/RGg_dx}
}
\endpgfgraphicnamed \qquad
\left(%
\beginpgfgraphicnamed{RGgenerator/RGg_dx}
}
\endpgfgraphicnamed\right)^\dagger=%
\beginpgfgraphicnamed{RGgenerator/RGg_dxd}
}
\endpgfgraphicnamed \qquad
\left(%
\beginpgfgraphicnamed{RGgenerator/RGg_xph_ab}
}
\endpgfgraphicnamed\right)^\dagger=%
\beginpgfgraphicnamed{RGstruLevel1/RGs_xph_ab_minus}
\begin{tikzpicture}
	\begin{pgfonlayer}{nodelayer}
		\node [style=rsn] (0) at (0, -0) {$\textnormal{-}\alpha$\nodepart{lower}$\textnormal{-}\beta$};
		\node [style=none] (1) at (0, 0.5) {};
		\node [style=none] (2) at (0, -0.5) {};
	\end{pgfonlayer}
	\begin{pgfonlayer}{edgelayer}
		\draw (1.center) to (0);
		\draw (2.center) to (0);
	\end{pgfonlayer}
\end{tikzpicture}
}
\endpgfgraphicnamed \qquad
\left(%
\beginpgfgraphicnamed{RGgenerator/RGg_Hadad}
}
\endpgfgraphicnamed\right)^\dagger=%
\beginpgfgraphicnamed{RGgenerator/RGg_Hada}
}
\endpgfgraphicnamed
\]
\end{center}

where functoriality of $(\cdot)^\dagger$ is guaranteed by Rule \ref{dagger}.

\subsection{RG interpretation}
 Here, we give an
interpretation for these graphs by describing a
monoidal functor $[\cdot]_{RG}:\mathbf{RG}\rightarrow \mathbf{FdHilb_Q}$, mapping the morphisms as follows (expressed
in Dirac notation):
\begin{center}
\[
\left[%
\beginpgfgraphicnamed{RGgenerator/RGg_ezd}
}
\endpgfgraphicnamed\right]_{RG}=\ket{+} \qquad
\left[%
\beginpgfgraphicnamed{RGgenerator/RGg_ez}
}
\endpgfgraphicnamed\right]_{RG}=\bra{+} \qquad
\left[%
\beginpgfgraphicnamed{RGgenerator/RGg_dz}
}
\endpgfgraphicnamed\right]_{RG}=\ket{00}\bra{0}+\ket{11}\bra{1}+\ket{22}\bra{2}
\]
\[
\left[%
\beginpgfgraphicnamed{RGgenerator/RGg_dzd}
}
\endpgfgraphicnamed\right]_{RG}=\ket{0}\bra{00}+\ket{1}\bra{11}+\ket{2}\bra{22} \qquad
\left[%
\beginpgfgraphicnamed{RGgenerator/RGg_zph_ab}
}
\endpgfgraphicnamed\right]_{RG}=\ket{0}\bra{0}+ e^{i \alpha}\ket{1}\bra{1}
+ e^{i \beta}\ket{2}\bra{2} \qquad
\]

\[
\left[%
\beginpgfgraphicnamed{RGgenerator/RGg_exd}
}
\endpgfgraphicnamed\right]_{RG}=\ket{0} \qquad
\left[%
\beginpgfgraphicnamed{RGgenerator/RGg_ex}
}
\endpgfgraphicnamed\right]_{RG}=\bra{0} \qquad
\left[%
\beginpgfgraphicnamed{RGgenerator/RGg_dx}
}
\endpgfgraphicnamed\right]_{RG}=\ket{++}\bra{ +}+\ket{\omega\omega}\bra{\omega}+\ket{\bar{\omega}\bar{\omega}}\bra{\bar{\omega}}
\]
\[
\left[%
\beginpgfgraphicnamed{RGgenerator/RGg_dxd}
}
\endpgfgraphicnamed\right]_{RG}=\ket{+}\bra{++}+\ket{\omega}\bra{\omega\omega}
+\ket{\bar{\omega}}\bra{\bar{\omega}\bar{\omega}} \qquad
\left[%
\beginpgfgraphicnamed{RGgenerator/RGg_xph_ab}
}
\endpgfgraphicnamed\right]_{RG}=\ket{+}\bra{+}+ e^{i \alpha}\ket{\omega}\bra{\omega}
+ e^{i \beta}\ket{\bar{\omega}}\bra{\bar{\omega}}
\]

\[
\left[%
\beginpgfgraphicnamed{RGgenerator/RGg_Hada}
}
\endpgfgraphicnamed\right]_{RG}=\ket{+}\bra{0}+ \ket{\omega}\bra{1}+\ket{\bar{\omega}}\bra{2} \qquad
\left[%
\beginpgfgraphicnamed{RGgenerator/RGg_Hadad}
}
\endpgfgraphicnamed\right]_{RG}=\ket{0}\bra{+}+ \ket{1}\bra{\omega}+\ket{2}\bra{\bar{\omega}}
\]
\end{center}

where $\omega=e^{\frac{2}{3}\pi i},\bar{\omega}=e^{\frac{4}{3}\pi i}$, and
 \begin{center}
 $
   \left\{\begin{array}{rcl}
      \ket{+} & = & \ket{0}+\ket{1}+\ket{2}\\
      \ket{\omega} & = &  \ket{0}+\omega \ket{1}+\bar{\omega}\ket{2}\\
      \ket{\bar{\omega}} & = & \ket{0}+\bar{\omega}\ket{1}+\omega\ket{2}
    \end{array}\right.
   $
 \end{center}

\begin{proposition}\label{inter}
 $[\cdot]_{RG}$ is  a symmetric monoidal $\dag-$functor.
\end{proposition}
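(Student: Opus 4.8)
The plan is to use the universal property of the free symmetric monoidal category on the generating object $*$ and the generating morphisms listed above. Since $\mathbf{RG}$ is this free category quotiented by the relations of Figure~\ref{figure1} together with their closures under rules~\ref{dagger} and~3, specifying a symmetric monoidal functor into $\mathbf{FdHilb_Q}$ amounts to (i) fixing the image of the generating object, here $*\mapsto Q=\mathbb{C}^3$, and the image of each generating morphism, as already prescribed by the Dirac-notation table; and (ii) checking that these images satisfy every defining relation. Composition is sent to matrix composition, tensor to the tensor product of linear maps, the identity wire to $1_Q$, and the symmetry (wire crossing) to the swap map on $Q\otimes Q$; strict monoidality and preservation of the symmetry are then automatic because both $\mathbf{RG}$ and $\mathbf{FdHilb_Q}$ are (essentially) strict and the images of crossings are genuine swaps. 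Throughout, equalities need hold only up to a nonzero scalar, since morphisms in $\mathbf{FdHilb_Q}$ are identified modulo $\mathbb{C}^\times$; this matches the convention of discarding scalar (closed) components.

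First I would discharge the $\dagger$-compatibility. It suffices to verify $[g^\dagger]_{RG}=([g]_{RG})^\dagger$ on each generator $g$, after which the equality extends to all morphisms because both the syntactic $\dagger$ of $\mathbf{RG}$ and the Hilbert-space adjoint are involutive contravariant functors agreeing on generators. These are one-line checks: e.g. $[\epsilon_Z^\dagger]_{RG}=\ket{+}=(\bra{+})^\dagger=([\epsilon_Z]_{RG})^\dagger$; $[\delta_Z^\dagger]_{RG}$ is the conjugate transpose of $[\delta_Z]_{RG}$ since both are real in the computational basis; $[P_Z(\alpha,\beta)^\dagger]_{RG}=[P_Z(-\alpha,-\beta)]_{RG}=\ket{0}\bra{0}+e^{-i\alpha}\ket{1}\bra{1}+e^{-i\beta}\ket{2}\bra{2}$, which is exactly $([P_Z(\alpha,\beta)]_{RG})^\dagger$; and $[H^\dagger]_{RG}=([H]_{RG})^\dagger$ because the two Hadamard images are conjugate-transpose to one another. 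As a by-product, since rule~\ref{dagger} (flip of graphs, negation of angles, exchange of $H$ and $H^\dagger$) is precisely the syntactic $\dagger$, the closure of the relation set under it is automatically respected once the relations of Figure~\ref{figure1} are verified.

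The bulk of the work is item (ii): verifying the images of the relations in Figure~\ref{figure1}, which I would organise by type. The single-colour (spider) rules S1--S2 follow from the fact that $([\delta_Z]_{RG},[\epsilon_Z]_{RG})$ and $([\delta_X]_{RG},[\epsilon_X]_{RG})$ are the comultiplication/counit of the special commutative Frobenius algebras on $Q$ determined by the orthogonal bases $\{\ket{0},\ket{1},\ket{2}\}$ and $\{\ket{+},\ket{\omega},\ket{\bar{\omega}}\}$ respectively (each comultiplication copies the vectors of its basis and each counit is the uniform effect on it), so the algebraic spider law holds up to the normalising scalar. The two-colour rules B1--B2 express complementarity: one checks that the X-basis vectors are, up to scalar, copied by the Z-comultiplication and vice versa, i.e. that the two bases are mutually unbiased, which is immediate from the explicit formulas and $1+\omega+\bar{\omega}=0$. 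The Hadamard rules H1--H2 reduce to the computation that $[H]_{RG}$ sends the computational basis to the X-basis, hence conjugates the Z-Frobenius structure into the X-Frobenius structure; and the colour rules K1--K2 are finite linear-algebra identities in $\mathbb{C}^3$ obtained by evaluating both sides on a basis.

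The delicate step, and where I expect the real obstacle, is the pair of new rules P1 and P2 governing the dualiser of the two observables. Here I would compute the induced permutation of the X-basis explicitly: whereas in the qubit case the $Z/X$ dualiser is the identity permutation, in the qutrit case one must verify that it is the \emph{odd} permutation asserted in the text, by tracking the action on $\ket{+},\ket{\omega},\ket{\bar{\omega}}$ through the defining maps and confirming that P2 holds under the interpretation. Because this is exactly where qutrits depart from qubits, the scalar and permutation bookkeeping is the most error-prone part, and care is needed to keep every identity valid only up to $\mathbb{C}^\times$. Finally, for rule~3 (closure under colour flip) I would observe that conjugation by $[H]_{RG}$ interchanges the images of the Z- and X-generators up to scalar, so the interpretation carries a colour-flipped Figure~\ref{figure1} equation to a valid Hilbert-space identity for every rule except K2 and H2 --- exactly the two excluded by rule~3, consistent with the asymmetry of the qutrit Hadamard noted above. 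Assembling (i) and (ii) then yields that $[\cdot]_{RG}$ is a well-defined symmetric monoidal $\dagger$-functor.
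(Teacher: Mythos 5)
Your proposal takes essentially the same route as the paper's proof, which consists of exactly the three obligations you list: checking that each rule $f=g$ of Figure~\ref{figure1} (and its closures) satisfies $[f]_{RG}=[g]_{RG}$, that $[\cdot]_{RG}$ respects the symmetric monoidal structure on generators, and that $[f^\dagger]_{RG}=[f]_{RG}^\dagger$ for each generator; your write-up simply organizes these checks in more detail than the paper does.

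One elaborated step, however, is stated too strongly and would need repair in a full verification. You claim that conjugation by $[H]_{RG}$ interchanges the images of the Z- and X-generators up to scalar, and you use this to discharge the colour-flip closure. But $[H]_{RG}^2$ is \emph{not} proportional to the identity: it is proportional to the odd permutation $\ket{0}\bra{0}+\ket{1}\bra{2}+\ket{2}\bra{1}$ --- precisely the qutrit peculiarity the paper emphasizes when contrasting with the qubit case. Concretely, while $[H]_{RG}\,[P_Z(\alpha,\beta)]_{RG}\,[H]_{RG}^{-1}\propto[P_X(\alpha,\beta)]_{RG}$, in the reverse direction one gets $[H]_{RG}\,[P_X(\alpha,\beta)]_{RG}\,[H]_{RG}^{-1}\propto[P_Z(\beta,\alpha)]_{RG}$, with the two angles swapped. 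So conjugation does not literally produce the colour-flipped equation; it produces it with $\alpha$ and $\beta$ exchanged on the former X-phases. The shortcut can be saved --- either by noting that every rule is a schema quantified over all $\alpha,\beta$, so the swap is absorbed by renaming, or by conjugating with $[H]_{RG}^{\dagger}$ in the reverse direction --- or one can simply verify the colour-flipped rules directly, as the paper's proof implicitly requires. With that repair your argument is sound.
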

\begin{proof}
This involves checking for each rule $f = g$
in $\mathbf{RG}$ that $[f]_{RG}=[g]_{RG}$, that$[\cdot]_{RG}$ respects
the symmetric monoidal structure on the generators,
and for each generator $f$, we have $[f]^{\dag}_{RG}=[f^{\dag}]_{RG}$.
\end{proof}

\section{Decomposition of the Hadamard Gate}\label{sec5}
It can be directly checked that in $\mathbf{FdHilb_Q}$, we have
\begin{equation*}
[H]_{RG}= [P_X(\frac{4\pi}{3},\frac{4\pi}{3})]_{RG}\circ[P_Z(\frac{4\pi}{3},\frac{4\pi}{3})]_{RG}\circ[P_X(\frac{4\pi}{3},\frac{4\pi}{3})]_{RG}
\end{equation*}
We call the following graph an Euler decomposition of
the Hadamard gate:
\ctikzfig{HadaDecomSingle}

\begin{proposition}
The Euler decomposition is not unique:
\ctikzfig{HadaDecom}
\end{proposition}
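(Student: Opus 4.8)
The plan is to prove the statement semantically, through the interpretation functor. By Proposition~\ref{inter}, $[\cdot]_{RG}$ is a symmetric monoidal $\dagger$-functor into $\mathbf{FdHilb_Q}$, so it suffices to exhibit two \emph{distinct} Euler-form graphs whose images under $[\cdot]_{RG}$ both equal $[H]_{RG}$; and since morphisms of $\mathbf{FdHilb_Q}$ are taken modulo a nonzero scalar, the whole computation need only be checked up to global phase. Writing everything in the computational basis, $[P_Z(\alpha,\beta)]_{RG}=\mathrm{diag}(1,e^{i\alpha},e^{i\beta})$, the Hadamard $[H]_{RG}$ is a scalar multiple of the $3\times 3$ discrete Fourier matrix with entries $\omega^{jk}$, and $[P_X(\alpha,\beta)]_{RG}=[H]_{RG}\,\mathrm{diag}(1,e^{i\alpha},e^{i\beta})\,[H]_{RG}^{-1}$ up to scalar, since the $X$-phase gate is the $Z$-phase gate conjugated into the $X$-basis $\{\ket{+},\ket{\omega},\ket{\bar{\omega}}\}$.

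The first decomposition, $[H]_{RG}=[P_X(\frac{4\pi}{3},\frac{4\pi}{3})]_{RG}\circ[P_Z(\frac{4\pi}{3},\frac{4\pi}{3})]_{RG}\circ[P_X(\frac{4\pi}{3},\frac{4\pi}{3})]_{RG}$, is already recorded above. For the second I would take the colour-swapped word $[P_Z(\frac{4\pi}{3},\frac{4\pi}{3})]_{RG}\circ[P_X(\frac{4\pi}{3},\frac{4\pi}{3})]_{RG}\circ[P_Z(\frac{4\pi}{3},\frac{4\pi}{3})]_{RG}$ and verify that it too equals $[H]_{RG}$ up to a scalar. The cleanest way to see this is to conjugate the first identity by $[H]_{RG}$: conjugation by $[H]_{RG}$ sends $Z$-phase gates to $X$-phase gates and back (using that $[H]_{RG}^2$ is proportional to the permutation $1\leftrightarrow 2$, which swaps the two phase angles and hence fixes $P_Z(\frac{4\pi}{3},\frac{4\pi}{3})$), while it fixes $[H]_{RG}$ itself. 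Applying this operation to $[H]_{RG}=[P_X][P_Z][P_X]$ turns the right-hand side into $[P_Z][P_X][P_Z]$ with the same angles and leaves the left-hand side unchanged, yielding the second decomposition. Alternatively one simply multiplies the three $3\times 3$ matrices out directly.

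Since the two graphs use different colour patterns ($XZX$ versus $ZXZ$) yet, by the above, denote the same morphism of $\mathbf{FdHilb_Q}$, the Euler decomposition of the qutrit Hadamard gate is not unique, as claimed. I expect the only mildly delicate point to be bookkeeping of the overall nonzero scalar produced by the repeated factors $[H]_{RG}^{\pm1}$ and by $[H]_{RG}^2$; this is harmless because equality in $\mathbf{FdHilb_Q}$ is asked only up to such a scalar. I would stress that this proposition asserts only the existence of two decompositions that agree \emph{semantically}; the strictly stronger claim flagged in the introduction---that their equality is \emph{not derivable} inside the dichromatic calculus---is a separate matter, requiring an alternative interpretation that distinguishes the two graphs, and is not needed here.
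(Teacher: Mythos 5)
Your argument is mathematically sound. Granting the identity $[H]_{RG}=[P_X(\frac{4\pi}{3},\frac{4\pi}{3})]_{RG}\circ[P_Z(\frac{4\pi}{3},\frac{4\pi}{3})]_{RG}\circ[P_X(\frac{4\pi}{3},\frac{4\pi}{3})]_{RG}$ recorded in the paper, conjugation by $[H]_{RG}$ indeed sends $Z$-phase gates to $X$-phase gates, and (since $[H]_{RG}^2$ is proportional to the transposition $\ket{1}\leftrightarrow\ket{2}$, which only swaps the two equal angles) $X$-phase gates back to $Z$-phase gates, so the colour-swapped word denotes $[H]_{RG}$ up to a nonzero scalar --- which is all that equality in $\mathbf{FdHilb_Q}$ demands; I have checked the matrix computation and both words equal a scalar multiple of the Fourier matrix. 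However, this is a genuinely different route from the paper's. The paper proves the displayed equation between the two phase-gate words \emph{diagrammatically}: its proof is a rewrite sequence carried out inside $\mathbf{RG}$ itself using the rules of Figure~\ref{figure1}, so the equality of the two decompositions is a theorem of the calculus, and its validity in $\mathbf{FdHilb_Q}$ then follows from soundness (Proposition~\ref{inter}) together with the already-checked first decomposition. Your proof establishes only the semantic statement. What each approach buys: yours is elementary linear algebra, requires no facility with the rules, and fully suffices if the proposition is read as a claim about denotations; the paper's yields the strictly stronger syntactic fact that $\mathbf{RG}$ itself proves the two words equal, which makes a sharp contrast with the proposition that follows, where no amount of rewriting can prove either word equal to the generator $H$.

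One correction to your closing remark. The non-derivability result flagged in the introduction concerns the equation between the generator $H$ and its Euler decomposition, not the equation between the two decompositions themselves. The latter \emph{is} derivable in $\mathbf{RG}$ --- that is precisely what the paper's proof of this proposition exhibits --- so no rule-preserving interpretation can distinguish the two phase-gate graphs; in particular the functor $[\cdot]_0$ of the next proposition sends both to the identity. What $[\cdot]_0$ does refute is derivability of the equation between $H$ and either word. So your framing ("their equality is not derivable, requiring an interpretation that distinguishes the two graphs") has the roles reversed: the two decompositions are provably equal to each other, and only their equality with $H$ lies beyond the reach of the calculus.
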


\begin{proof}
\ctikzfig{HadaDecomProof}

\end{proof}
In the qubit case, Duncan and Perdrix \cite{RossPerdrix} proved that the Euler decomposition is not derivable from  ZX calculus. Similarly, we have
\begin{proposition}
The Euler decomposition is not derivable from $\mathbf{RG}$.
\end{proposition}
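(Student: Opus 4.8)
The plan is to argue \emph{semantically}. Every equation derivable from the rules of $\mathbf{RG}$ holds under any symmetric monoidal $\dagger$-functor that is sound for those rules; hence it suffices to produce one such functor under which the two sides of the Euler decomposition receive different values. This is exactly the strategy of Duncan and Perdrix \cite{RossPerdrix} in the qubit case, which I would adapt to qutrits.

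Concretely, I would define an alternative interpretation $F:\mathbf{RG}\rightarrow\mathbf{FdHilb_Q}$ that agrees with the standard interpretation $[\cdot]_{RG}$ of Proposition~\ref{inter} on \emph{all} monochromatic generators — the green and red comultiplications, (co)units and phase gates together with their adjoints — and differs from it only on the Hadamard, setting $F(H)=R\circ[H]_{RG}$ and $F(H^\dagger)=F(H)^\dagger$, where $R$ is the automorphism of the red ($X$) Frobenius algebra that exchanges the classical states $\ket{\omega}$ and $\ket{\bar{\omega}}$ and fixes $\ket{+}$. Since $R$ permutes the red classical states it is a comonoid automorphism, and since it fixes their sum (which is $3\ket{0}$, the red unit) it also fixes the red unit; thus $R$ is an honest Frobenius-algebra automorphism of the red structure, and it is visibly not a scalar multiple of the identity.

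With this $F$, every rule of $\mathbf{RG}$ not mentioning $H$ is validated automatically, because there $F$ coincides with $[\cdot]_{RG}$, which is sound by Proposition~\ref{inter}. For the colour-change rule H2, which conjugates the green structure into the red one by Hadamards, the extra factor cancels: $F(H)\,(\text{green})\,F(H)^\dagger = R\,[H]_{RG}(\text{green})[H]_{RG}^\dagger\,R^\dagger = R\,(\text{red})\,R^\dagger = (\text{red})$, since $R$ is a red automorphism; the same cancellation handles the dagger- and flip-generated variants of H2. The Euler decomposition, on the other hand, is refuted: its right-hand side consists only of phase gates, so $F$ leaves it equal to the standard value $[H]_{RG}$, whereas $F(H)=R\circ[H]_{RG}\neq[H]_{RG}$ modulo scalars because $R$ is a genuine transposition. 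Hence $F$ is a model of the RG rules in which the Euler decomposition fails, and so that decomposition is not derivable.

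The main obstacle is checking the remaining Hadamard rule H1 (and the instances of H1 and H2 produced by the dagger, colour-flip and graph-flip meta-rules): one must confirm that the inserted automorphism $R$ is invisible to \emph{every} such identity. This reduces to verifying that each Hadamard axiom is phrased through the red Frobenius structure in an $R$-equivariant way — equivalently, that H1 and H2 do not pin the Hadamard down to a single scalar-orbit. If some instance of H1 were to single out $\ket{\omega}$ or $\ket{\bar{\omega}}$ asymmetrically (for instance an $H^2$-type relation that $R$ fails to commute with), one would replace $R$ by another representative of the residual symmetry — a different element of the $S_3$ of red classical-state permutations. Locating a symmetry that is simultaneously invisible to all Hadamard rules and visible to the Euler product is the delicate point; once such an $R$ is fixed, soundness and the refutation of the Euler decomposition are routine.
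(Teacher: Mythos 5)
Your overall strategy---exhibit an alternative sound interpretation of $\mathbf{RG}$ under which the two sides of the Euler decomposition receive different values, \`a la Duncan--Perdrix---is exactly the paper's strategy. But your concrete model is unsound, and the flaw is not a loose end to be checked later: it kills the whole approach. The problem is the claim that ``$R\,(\text{red})\,R^\dagger = (\text{red})$ since $R$ is a red automorphism.'' Your $R$ is an automorphism of the \emph{phase-free} red Frobenius structure, but the colour-change rule H2 also applies to phase spiders: it asserts (up to scalar) $H\,P_Z(\alpha,\beta)\,H^\dagger = P_X(\alpha,\beta)$. Since $R$ swaps $\ket{\omega}$ and $\ket{\bar{\omega}}$, conjugation by $R$ swaps the two parameters of a red phase gate,
\[
R\,P_X(\alpha,\beta)\,R^\dagger \;=\; \ketbra{+}{+}+e^{i\alpha}\ketbra{\bar{\omega}}{\bar{\omega}}+e^{i\beta}\ketbra{\omega}{\omega}\;=\;P_X(\beta,\alpha),
\]
so under your $F$ the instance of H2 at, say, $(\alpha,\beta)=(\tfrac{2\pi}{3},0)$ fails: $F(H)\,[P_Z(\alpha,\beta)]_{RG}\,F(H)^\dagger = R\,P_X(\alpha,\beta)\,R^\dagger = P_X(\beta,\alpha)\neq P_X(\alpha,\beta) = F(P_X(\alpha,\beta))$. (Concretely, your $R\circ[H]_{RG}$ is nothing other than $[H^\dagger]_{RG}$, and the rules are not invariant under exchanging $H\leftrightarrow H^\dagger$ alone---only under the full dagger, i.e.\ that exchange combined with graph flip and angle negation.)

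The fallback you propose cannot rescue this. Conjugation by any non-identity permutation of the red classical states permutes the parameters of the red phase gates nontrivially, so no element of your $S_3$ except the identity is compatible with H2-with-phases. Worse, the constraint is rigid: any unitary $R$ satisfying $R\,P_X(\alpha,\beta)\,R^\dagger\propto P_X(\alpha,\beta)$ for all $\alpha,\beta$ must be diagonal in the basis $\{\ket{+},\ket{\omega},\ket{\bar{\omega}}\}$, i.e.\ itself a red phase gate, and the phase-free instances of H2 (colour change for $\delta_Z$, $\epsilon_Z$) then force that phase gate to be trivial. Hence \emph{every} interpretation agreeing with $[\cdot]_{RG}$ on all monochromatic generators is forced to send $H$ to $[H]_{RG}$ up to a scalar, and therefore validates the Euler decomposition: the freedom must be placed in the phases, not in the Hadamard. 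That is precisely what the paper does---it keeps $H$ standard and collapses the phases, setting $[P_X(\alpha,\beta)]_0:=[P_X(0,0)]_{RG}$ and $[P_Z(\alpha,\beta)]_0:=[P_Z(0,0)]_{RG}$. Every rule survives this collapse (each rule's phase content degenerates to its phase-free instance), while the right-hand side of the Euler decomposition becomes the identity and the left-hand side stays $[H]_0=[H]_{RG}$, giving the desired non-derivability.
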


\begin{proof}
We define an alternative interpretation functor $[\cdot]_0: \mathbf{RG} \rightarrow \mathbf{FdHilb_Q}$ exactly as $[\cdot]_{RG}$ with the following change:

\begin{equation*}
[P_X(\alpha,\beta)]_{0}=[P_X(0,0)]_{RG} \qquad [P_Z(\alpha,\beta)]_{0}=[P_Z(0,0)]_{RG}
\end{equation*}

This functor preserves all the rules introduced in Figure \ref{figure1}, so its image is indeed a valid model of the theory.
However we have the following inequality
\begin{equation*}
[H]_{0}\neq [P_X(\frac{4\pi}{3},\frac{4\pi}{3})]_{0}\circ[P_Z(\frac{4\pi}{3},\frac{4\pi}{3})]_{0}\circ[P_X(\frac{4\pi}{3},\frac{4\pi}{3})]_{0}
\end{equation*}
 hence the Euler decomposition is not derivable from $\mathbf{RG}$.

\end{proof}
\section{Quantum Algorithm with a Single Qutrit}
Recently, Gedik\cite{Gedik} introduces
a simple algorithm using only a single qutrit to determine the parity
of permutations of a set of three objects. As in the case of Deutsch's algorithm, a speed-up relative to corresponding classical algorithms is obtained.

Consider the six permutations of the set $\{0,1,2\}$. Each permutation can be treated as a function $f(x)$ defined on the set $x\in \{0, 1, 2\}$. Then the task is to determine its parity. The problem could be solved by evaluating $f(x)$ for two different values of x.

The function $f$ has a domain and range of three values. These three values correspond to the three states
of a qutrit $\ket{m}$ where $m = 0,1,2$. The unitary $U_f$ corresponding to
the function $f$ is a simple transposition of orthonormal states $\ket{m}$. Applying $U_f$ to the eigenstate $\ket{\omega}$ of the $X$ observable we obtain

 \begin{center}
 $
   \left\{\begin{array}{rcll}
      U_f \ket{\omega} & = &\ket{\omega}\textnormal{(up to a phase)} & \textnormal{if $f$ is an even permutation;}\\
      U_f \ket{\omega} & = &\ket{\bar{\omega}}\textnormal{(up to a phase)} & \textnormal{if $f$ is an odd permutation.}
    \end{array}\right.
   $
 \end{center}

Thus, a single evaluation of the function is enough to determine its parity.

The above algorithm can be depicted by the dichromatic calculus as follows:

\[
\begin{tabular}{|c|c|@{}c@{}|@{}c@{}|c|c|c|}
\hline
$f$ &(0)&(1 2)(0 1)&(1 2)(0 2)&(1 2)&(0 1)&(0 2)\\
\hline
$U_f$ & %
\beginpgfgraphicnamed{Algorithm/p00}
\begin{tikzpicture}
	\begin{pgfonlayer}{nodelayer}
		\node [style=none] (0) at (0, -0.5) {};
		\node [style=none] (1) at (0, 0.5) {};
		\node [style=rsn] (2) at (0, -0) {\tiny $0$\nodepart{lower}\tiny $0$};
	\end{pgfonlayer}
	\begin{pgfonlayer}{edgelayer}
		\draw (2) to (0.center);
		\draw (2) to (1.center);
	\end{pgfonlayer}
\end{tikzpicture}}
\endpgfgraphicnamed & %
\beginpgfgraphicnamed{Algorithm/p12}
\begin{tikzpicture}
	\begin{pgfonlayer}{nodelayer}
		\node [style=none] (0) at (0, 0.5) {};
		\node [style=rsn] (1) at (0, -0) {\tiny $1$\nodepart{lower}\tiny $2$};
		\node [style=none] (2) at (0, -0.5) {};
	\end{pgfonlayer}
	\begin{pgfonlayer}{edgelayer}
		\draw (1) to (2.center);
		\draw (1) to (0.center);
	\end{pgfonlayer}
\end{tikzpicture}}
\endpgfgraphicnamed & %
\beginpgfgraphicnamed{Algorithm/p21}
\begin{tikzpicture}
	\begin{pgfonlayer}{nodelayer}
		\node [style=none] (0) at (0, 0.5) {};
		\node [style=rsn] (1) at (0, -0) {\tiny $2$\nodepart{lower}\tiny $1$};
		\node [style=none] (2) at (0, -0.5) {};
	\end{pgfonlayer}
	\begin{pgfonlayer}{edgelayer}
		\draw (1) to (2.center);
		\draw (1) to (0.center);
	\end{pgfonlayer}
\end{tikzpicture}}
\endpgfgraphicnamed & %
\beginpgfgraphicnamed{Algorithm/pd}
\begin{tikzpicture}
	\begin{pgfonlayer}{nodelayer}
		\node [style={H box}] (0) at (0, -0.25) {$D$};
		\node [style=none] (1) at (0, -0.75) {};
		\node [style=none] (2) at (0, 0.25) {};
	\end{pgfonlayer}
	\begin{pgfonlayer}{edgelayer}
		\draw (1.center) to (0);
		\draw (0) to (2.center);
	\end{pgfonlayer}
\end{tikzpicture}}
\endpgfgraphicnamed & %
\beginpgfgraphicnamed{Algorithm/p12d}
\begin{tikzpicture}
	\begin{pgfonlayer}{nodelayer}
		\node [style=none] (0) at (0, -1) {};
		\node [style={H box}] (1) at (0, -0.5) {$D$};
		\node [style=rsn] (2) at (0, -0) {\tiny $1$\nodepart{lower}\tiny $2$};
		\node [style=none] (3) at (0, 0.5) {};
	\end{pgfonlayer}
	\begin{pgfonlayer}{edgelayer}
		\draw (2) to (1);
		\draw (1) to (0.center);
		\draw (2) to (3.center);
	\end{pgfonlayer}
\end{tikzpicture}}
\endpgfgraphicnamed & %
\beginpgfgraphicnamed{Algorithm/p21d}
\begin{tikzpicture}
	\begin{pgfonlayer}{nodelayer}
		\node [style=none] (0) at (0, -1) {};
		\node [style=rsn] (1) at (0, -0) {\tiny $2$\nodepart{lower}\tiny $1$};
		\node [style={H box}] (2) at (0, -0.5) {$D$};
		\node [style=none] (3) at (0, 0.5) {};
	\end{pgfonlayer}
	\begin{pgfonlayer}{edgelayer}
		\draw (1) to (2);
		\draw (2) to (0.center);
		\draw (1) to (3.center);
	\end{pgfonlayer}
\end{tikzpicture}}
\endpgfgraphicnamed\\
\hline
\textnormal{ $U_f \ket{w}$}&\multicolumn{3}{|c}{%
\beginpgfgraphicnamed{Algorithm/even}
\begin{tikzpicture}
	\begin{pgfonlayer}{nodelayer}
		\node [style=rsn] (0) at (-1, -0) {\tiny $0$\nodepart{lower}\tiny $0$};
		\node [style=gsn] (1) at (-1, 0.5) {\tiny $1$\nodepart{lower}\tiny $2$};
		\node [style=none] (2) at (-1, -0.5) {};
		\node [style=rsn] (3) at (1, -0) {\tiny $2$\nodepart{lower}\tiny $1$};
		\node [style=none] (4) at (1, -0.5) {};
		\node [style=rsn] (5) at (0, -0) {\tiny $1$\nodepart{lower}\tiny $2$};
		\node [style=none] (6) at (0, -0.5) {};
		\node [style=none] (7) at (-0.5, -0) {$=$};
		\node [style=none] (8) at (0.5, -0) {$=$};
		\node [style=none] (9) at (1.5, -0) {$=$};
		\node [style=gsn] (10) at (1, 0.5) {\tiny $1$\nodepart{lower}\tiny $2$};
		\node [style=gsn] (11) at (0, 0.5) {\tiny $1$\nodepart{lower}\tiny $2$};
		\node [style=gsn] (12) at (2, 0.25) {\tiny $1$\nodepart{lower}\tiny $2$};
		\node [style=none] (13) at (2, -0.25) {};
	\end{pgfonlayer}
	\begin{pgfonlayer}{edgelayer}
		\draw (0) to (2.center);
		\draw (0) to (1);
		\draw (3) to (4.center);
		\draw (5) to (6.center);
		\draw (3) to (10);
		\draw (5) to (11);
		\draw (12) to (13.center);
	\end{pgfonlayer}
\end{tikzpicture}}
\endpgfgraphicnamed}&\multicolumn{3}{|c|}{%
\beginpgfgraphicnamed{Algorithm/odd}
\begin{tikzpicture}
	\begin{pgfonlayer}{nodelayer}
		\node [style=gsn] (0) at (-0.5, 0.5) {\tiny $1$\nodepart{lower}\tiny $2$};
		\node [style=none] (1) at (1, -0) {$=$};
		\node [style=gsn] (2) at (1.5, 0.25) {\tiny $2$\nodepart{lower}\tiny $1$};
		\node [style=gsn] (3) at (0.5, 0.5) {\tiny $1$\nodepart{lower}\tiny $2$};
		\node [style=gsn] (4) at (-1.5, 0.25) {\tiny $1$\nodepart{lower}\tiny $2$};
		\node [style=none] (5) at (1.5, -0.25) {};
		\node [style=rsn] (6) at (0.5, -0) {\tiny $2$\nodepart{lower}\tiny $1$};
		\node [style=none] (7) at (0, -0) {$=$};
		\node [style=rsn] (8) at (-0.5, -0) {\tiny $1$\nodepart{lower}\tiny $2$};
		\node [style=none] (9) at (-1, -0) {$=$};
		\node [style={H box}] (10) at (-1.5, -0.25) {$D$};
		\node [style={H box}] (11) at (0.5, -0.5) {$D$};
		\node [style={H box}] (12) at (-0.5, -0.5) {$D$};
		\node [style=none] (13) at (-0.5, -1) {};
		\node [style=none] (14) at (0.5, -1) {};
		\node [style=none] (15) at (-1.5, -0.75) {};
	\end{pgfonlayer}
	\begin{pgfonlayer}{edgelayer}
		\draw (6) to (3);
		\draw (8) to (0);
		\draw (2) to (5.center);
		\draw (15.center) to (10);
		\draw (14.center) to (11);
		\draw (11) to (6);
		\draw (12) to (8);
		\draw (13.center) to (12);
		\draw (10) to (4);
	\end{pgfonlayer}
\end{tikzpicture}}
\endpgfgraphicnamed}\\
\hline
\textnormal{Parity}&\multicolumn{3}{|c}{Even}&\multicolumn{3}{|c|}{Odd}\\
\hline
\end{tabular}
\]

\section{The Qudit ZX Calculus Is Universal}

 It is  important to prove that the qudit ZX calculus is universal for quantum mechanics for any $d$.  Since it is not easy to decompose an arbitrary $d\times d$ unitary matrix into Z and X phase gates (i.e., $\Lambda_Z$ and $\Lambda_X$ gates) when $d>2$, the proof of university is far from trivial. Due to Brylinski \cite{Bry}, to prove the universality of qudit ZX calculus for quantum mechanics, it suffices to prove that the qudit ZX calculus contains all single qudit unitary transformations. Such a proof given in \cite{Ranchin} is based on the fact \cite{Muth} that the d-dimensional phase gates $Z_d, X_d$  are sufficient to simulate all single qudit unitary transforms, where \[Z_{d}(b_0,b_1...,b_{d-1}):b_0\ket{0}+b_1\ket{1}+...+b_{d-1}\ket{d-1}\mapsto \ket{d-1}\] (the $d$ complex coefficients, $b_0,b_1...,b_{d-1}$ are normalized to unity)
\begin{displaymath}
X_d(\phi): \left\{ \begin{array}{lll}
\ket{d-1} &\mapsto &e^{i \phi}\ket{d-1}\\
\ket{p} &\mapsto &\ket{p} \textnormal{ for } p\neq d-1
\end{array} \right.
\end{displaymath}

It was checked in \cite{Ranchin} that each  $X_d$ can be encoded to a phase gate $\Lambda_Z$  of the qudit ZX calculus, where
\begin{displaymath}
\Lambda_Z(\alpha_1,\alpha_2,
...,\alpha_{d-1}) :=\left(
\begin{array}{cccc}
1&                &        & \\
 & e^{i \alpha_1} &        & \\
 &                & \ddots & \\
 &                &        & e^{i \alpha_{d-1}}
\end{array}
\right)
\end{displaymath}

Meanwhile, some $Z_d$ phase gates were shown to be realized by  $\Lambda_X$ phase gate in the qudit ZX calculus, where

\begin{displaymath}
\Lambda_X(\alpha_1,\alpha_2,
...,\alpha_{d-1}) :=\frac{1}{d}\left(
\begin{array}{cccccc}
c_0     & c_{d-1} & c_{d-2} & ... & c_2     & c_1 \\
c_1     & c_0     & c_{d-1} & ... & c_3     & c_2 \\
c_2     & c_1     & c_0     & ... & c_4     & c_3 \\
...     & ...     & ...     & ... & ...     & ... \\
c_{d-1} & c_{d-2} & c_{d-3} & ... & c_1     & c_0
\end{array}
\right)
\end{displaymath}

$c_k=1+\sum^{d-1}_{l=1}\eta^{r_k(l)} e^{i\alpha_l}$, $r_k$
permutes the entries $1$ (there is one $r_k$ for each k).


 However, not every $Z_d$ phase gate can be represented by
$\Lambda_X(\alpha_1,\alpha_2,...,\alpha_{d-1})$. In fact, to realize any $Z_{d}(b_0,b_1...,b_{d-1})$ in this way,
we need  to find  $\alpha_1,\alpha_2,...,\alpha_{d-1}$ such that

\begin{eqnarray}
  c_{d-1} b_0+c_{d-2} b_1+...c_1 b_{d-2}+c_{0} b_{d-1} &=& d\phantom{,\quad \forall k\neq d-1}\label{equ1} \\
  c_k b_0+c_{k-1} b_1+...c_0 b_k+c_{d-1} b_{k+1}+...c_{k+1} b_{d-1} &=& 0,\quad \forall k\neq d-1 \label{equ2}
\end{eqnarray}


Since $\sum^{d-1}_{k=0} c_k=d$, summing up all the equations in (\ref{equ1}) and (\ref{equ2}),
we have $\sum^{d-1}_{k=0} b_k=1$. Clearly, not every unit complex vector $(b_0,b_1...,b_{d-1})$ satisfies $\sum^{d-1}_{k=0} b_k=1$ or $\sum^{d-1}_{k=0} b_k=e^{i \alpha}$ up to
a global phase.

For example, $(b_0,b_1...,b_{d-1})=(0, \frac{1}{\sqrt{2}},\frac{1}{\sqrt{2}},0,...,0), d>2$, is such a counterexample.

\vspace{5mm}
The above argument means that we need to find
another proof that the qudit ZX calculus contains all single qudit unitary transformations. Next we solve this problem using the theory of Lie algebra.

Let \begin{displaymath}
H =\left\{\left.\left(
\begin{array}{ccc}
e^{i \alpha_0} & \phantom{e^{i \alpha_0}} & \phantom{e^{i \alpha_0}} \\
\phantom{e^{i \alpha_0}} & \ddots & \phantom{e^{i \alpha_0}}\\
\phantom{e^{i \alpha_0}} & \phantom{e^{i \alpha_0}} & e^{i \alpha_{d-1}}
\end{array}
\right)\right|\alpha_0,...,\alpha_{d-1} \in \mathbb{R}\right\},\quad V=\frac{1}{\sqrt{d}}\sum^{d-1}_{j,k=0}
 \omega^{j k}\ket{j}\bra{k}, \omega= e^{i \frac{2 \pi}{d}}, H'=VHV^{-1}
\end{displaymath}


\begin{proposition}\label{pro1}
Both $H$ and $H'$ are closed connected subgroups of the compact Lie group of unitaries $G=U(d)$.
\end{proposition}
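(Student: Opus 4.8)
The plan is to handle $H$ and $H'$ separately: I would treat $H$ directly as the standard maximal torus of $U(d)$, establishing each of the three required properties by hand, and then deduce the corresponding properties of $H'$ from the single observation that conjugation by the unitary $V$ is a topological automorphism of $U(d)$. As a preliminary I would record that $V$ is indeed unitary: computing $(V^\dagger V)_{kl}=\frac{1}{d}\sum_{j=0}^{d-1}\omega^{j(l-k)}=\delta_{kl}$ via the orthogonality relation $\sum_{j=0}^{d-1}\omega^{jm}=0$ for $m\not\equiv 0\pmod d$ gives $V\in U(d)$. This is precisely what guarantees that $H'=VHV^{-1}$ lands inside $U(d)$ at all, since $U(d)$ is stable under conjugation by its own elements.

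For $H$ I would verify the three properties directly. That $H$ is a subgroup is immediate: the product of two diagonal unitaries is diagonal unitary, the identity arises from all $\alpha_j=0$, and the inverse of $\mathrm{diag}(e^{i\alpha_0},\dots,e^{i\alpha_{d-1}})$ is $\mathrm{diag}(e^{-i\alpha_0},\dots,e^{-i\alpha_{d-1}})\in H$. For closedness I would note that $H$ is cut out inside $U(d)$ by the closed conditions $U_{jk}=0$ for $j\neq k$; equivalently, $H$ is the continuous image of the compact torus $(S^1)^d$, hence compact and therefore closed in the Hausdorff space $U(d)$. For connectedness I would exhibit $H$ as the image of the connected space $\mathbb{R}^d$ under the continuous surjection $(\alpha_0,\dots,\alpha_{d-1})\mapsto\mathrm{diag}(e^{i\alpha_0},\dots,e^{i\alpha_{d-1}})$, so $H$ is connected.

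For $H'$, rather than describe it by explicit matrices I would invoke the conjugation map $c_V:U(d)\to U(d)$, $c_V(U)=VUV^{-1}$. This map is a group automorphism (it is multiplicative and fixes $I$) and a homeomorphism, its continuous inverse being $c_{V^{-1}}$. Consequently $H'=c_V(H)$ inherits each property from $H$: it is a subgroup as the image of a subgroup under a group homomorphism, it is closed as the homeomorphic image of a compact set, and it is connected as the continuous image of a connected set.

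I do not expect a genuine obstacle here, as the argument is essentially structural. The only points requiring care are to use unitarity of $V$ rather than mere invertibility, so that $H'$ sits in $U(d)$ and not merely in $GL(d,\mathbb{C})$, and to record explicitly that conjugation by a fixed element of $U(d)$ is automatically a homeomorphism, which is what lets all three topological and algebraic properties transfer verbatim from $H$ to $H'$.
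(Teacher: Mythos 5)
Your proof is correct and takes essentially the same route as the paper: identify $H$ with the torus $S^1\times\cdots\times S^1$ to obtain closedness and connectedness, then transfer these properties to $H'=VHV^{-1}$ by conjugation. Your version is in fact tighter on two points the paper glosses over --- you deduce closedness from compactness (mere homeomorphism to a closed set would not suffice in an ambient space), and you replace the paper's claim that $H'$ is ``topologically isomorphic'' to $H$ with the precise statement that conjugation by the unitary $V$ is an ambient homeomorphism of $U(d)$.
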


\begin{proof}
Let $S^1=\{e^{i\alpha}|\alpha\in \mathbb{R}\}$. Clearly, the circle $S^1$ is closed and connected. Since $H\cong
S^1\times \cdots\times S^1$, $H$ is also a closed connected group. It is obvious that $H'$ is topologically isomorphic
to $H$. Thus $H'$ is a closed connected group.
\end{proof}

We need two lemmas as follows.

\begin{lemma}\label{lem1}
\cite{Bry} Let $G$ be a compact Lie group. If $H_1,...,H_k$ are closed connected
 subgroups and they generate a dense group of $G$, then in fact they generate $G$.
\end{lemma}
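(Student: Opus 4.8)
The plan is to set $K=\langle H_1,\dots,H_k\rangle$, the subgroup of $G$ generated by the $H_i$, and to prove directly that $K=G$. Since each $H_i$ is connected and contains the identity, $K$ is arcwise connected, so by the standard theory of integral (analytic) subgroups $K$ is a connected immersed Lie subgroup whose Lie algebra $\mathrm{Lie}(K)$ is the Lie subalgebra of $\mathrm{Lie}(G)$ generated by $\mathrm{Lie}(H_1),\dots,\mathrm{Lie}(H_k)$. The density hypothesis says precisely that the closure $\overline{K}$ equals $G$; being the closure of a connected set, $G$ is then connected, and being closed in a compact group it is a compact connected Lie group. The whole difficulty is that a dense immersed subgroup need not be closed (think of an irrational winding on a torus), so density alone is far from enough; the real content is to use compactness of $G$ together with closedness of the $H_i$ to exclude this.

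First I would invoke the standard structural fact that an analytic subgroup is a normal subgroup of its closure with abelian quotient. Applied to $K$ inside $\overline{K}=G$, this says that $\mathrm{Lie}(G)/\mathrm{Lie}(K)$ is abelian, hence $[\mathrm{Lie}(G),\mathrm{Lie}(G)]\subseteq\mathrm{Lie}(K)$. Passing back to groups, the integral subgroup of $[\mathrm{Lie}(G),\mathrm{Lie}(G)]$ is the derived subgroup $[G,G]$, which for a compact connected $G$ is closed; since its Lie algebra lies inside $\mathrm{Lie}(K)$, nesting of integral subgroups gives $[G,G]\subseteq K$. In other words, $K$ already contains the entire semisimple part of $G$, and only the central torus direction remains to be captured.

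To capture that, I would project along $\pi\colon G\to T:=G/[G,G]$, where $T$ is a torus. The key observation is that each $\pi(H_i)$ is the continuous image of a compact group, hence a compact, therefore closed, connected subgroup of $T$, i.e. a subtorus. Because $T$ is abelian, the subgroup $\pi(K)$ generated by these subtori is simply the product $\pi(H_1)\cdots\pi(H_k)$, itself a continuous image of a compact group and thus a closed subtorus of $T$. But $\pi(K)$ is dense, being the image of the dense subgroup $K$, so this closed subtorus must be all of $T$, giving $\pi(K)=T$. Combining this with $[G,G]\subseteq K$ finishes the proof: any $g\in G$ satisfies $\pi(g)=\pi(h)$ for some $h\in K$, whence $gh^{-1}\in\ker\pi=[G,G]\subseteq K$ and so $g\in K$; thus $K=G$.

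The main obstacle, and the place where every hypothesis is used, is exactly this torus step. For a merely connected family of subgroups the images $\pi(H_i)$ could be dense non-closed lines and $\pi(K)$ a proper dense subgroup of $T$, which is the irrational-winding counterexample in disguise. It is the closedness (equivalently, compactness) of each $H_i$ that forces $\pi(H_i)$ to be a genuine subtorus, and the reductivity of $\mathrm{Lie}(G)$ coming from compactness of $G$ that reduces matters to the torus $T$ via the decomposition $\mathrm{Lie}(G)=[\mathrm{Lie}(G),\mathrm{Lie}(G)]\oplus Z(\mathrm{Lie}(G))$. I expect the only non-elementary input to be the cited structural theorem that an analytic subgroup is normal in its closure with abelian quotient; the remaining steps are routine bookkeeping with the reductive decomposition and the elementary theory of subtori.
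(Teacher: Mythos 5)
The paper never proves this lemma at all: it is imported verbatim from Brylinski--Brylinski \cite{Bry} and used as a black box, so there is no in-paper proof to compare yours against --- what you have written is genuinely new content relative to the paper. Judged on its own, your argument is correct. The chain of steps is sound: Yamabe's theorem makes $K=\langle H_1,\dots,H_k\rangle$ an analytic (immersed connected) subgroup; the structural theorem that an analytic subgroup is normal in its closure with abelian quotient, applied to $\overline{K}=G$, gives $[\mathfrak{g},\mathfrak{g}]\subseteq \mathrm{Lie}(K)$; since for a compact connected $G$ the derived group $[G,G]$ is the \emph{closed} integral subgroup of $[\mathfrak{g},\mathfrak{g}]$, nesting of integral subgroups yields $[G,G]\subseteq K$; and finally, in the torus $T=G/[G,G]$, each $\pi(H_i)$ is a subtorus (compact connected, hence closed), so in the abelian group $T$ the subgroup $\pi(K)=\pi(H_1)\cdots\pi(H_k)$ is a continuous image of the compact set $H_1\times\cdots\times H_k$, hence closed, and being dense it equals $T$; pulling back along $\ker\pi=[G,G]\subseteq K$ gives $K=G$. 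Two points deserve explicit credit: you correctly observe that connectedness of $G$ is automatic (it is the closure of the connected dense set $K$), which matters since the statement does not assume it; and you correctly locate where each hypothesis is indispensable --- closedness/compactness of the $H_i$ is exactly what blocks the irrational-winding phenomenon at the torus step, while compactness of $G$ is what makes $[G,G]$ closed and the central quotient a torus. The only inputs you take on faith (Yamabe; the abelian-quotient-of-closure theorem; closedness of the derived group of a compact connected group) are standard and correctly quoted, so the proof stands.
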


\begin{lemma}\label{lem2}
\cite{Bry} Let $\mathfrak{h}= \textnormal{Lie } H$, $\mathfrak{h}'= \textnormal{Lie } H'$,
$\mathfrak{g}= \textnormal{Lie } U(d)$. If $\mathfrak{h}$ and $\mathfrak{h}'$ generate $\mathfrak{g}$ as a
Lie algebra, and $H$ and $H'$ are closed connected groups, then $H$ and $H'$ generate a dense subgroup of $U(d)$.
\end{lemma}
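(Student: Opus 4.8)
The plan is to bridge the infinitesimal hypothesis --- that $\mathfrak{h}$ and $\mathfrak{h}'$ generate $\mathfrak{g}$ as a Lie algebra --- with the global, topological conclusion by examining the \emph{closure} of the subgroup they generate. Let $K$ denote the abstract subgroup of $G=U(d)$ generated by $H$ and $H'$, and let $\overline{K}$ be its closure in $G$. Because the closure of a subgroup is again a subgroup, $\overline{K}$ is a closed subgroup of the Lie group $G$; by Cartan's closed-subgroup theorem it is therefore an embedded Lie subgroup, carrying a well-defined Lie algebra $\mathfrak{k}\subseteq\mathfrak{g}$.

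The heart of the argument is to identify $\mathfrak{k}$. First I would note that $H\subseteq K\subseteq\overline{K}$; since $H$ is a \emph{connected} subgroup of the Lie group $\overline{K}$, its Lie algebra satisfies $\mathfrak{h}\subseteq\mathfrak{k}$, and likewise $\mathfrak{h}'\subseteq\mathfrak{k}$. But $\mathfrak{k}$ is a Lie \emph{subalgebra} of $\mathfrak{g}$, so it must contain the entire Lie subalgebra generated by $\mathfrak{h}$ and $\mathfrak{h}'$, which by hypothesis is all of $\mathfrak{g}$. Hence $\mathfrak{k}=\mathfrak{g}$.

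It then remains to upgrade this equality of Lie algebras to an equality of groups. Let $\overline{K}_0$ be the identity component of $\overline{K}$: it is a connected Lie subgroup of $G$ with Lie algebra $\mathfrak{k}=\mathfrak{g}=\textnormal{Lie }G$. Since a connected Lie subgroup is uniquely determined by its Lie algebra, and $U(d)$ is itself connected, the only connected subgroup with Lie algebra $\mathfrak{g}$ is $U(d)$ itself; thus $\overline{K}_0=G$, so $\overline{K}=G$, i.e.\ $K$ is dense in $U(d)$, as claimed.

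The main obstacle is precisely the passage from the topological object $\overline{K}$ to something with a comparable Lie algebra: a priori $\overline{K}$ is only a closed set, and it is Cartan's closed-subgroup theorem that endows it with a smooth structure and a Lie algebra $\mathfrak{k}$, allowing the infinitesimal hypothesis to be brought to bear. Note that compactness of $U(d)$ is not needed for this lemma --- that is what Lemma~\ref{lem1} supplies --- and only the Lie-group structure together with connectedness of $U(d)$ is used here.
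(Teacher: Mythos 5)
Your proof is correct, but it cannot be compared against an in-paper argument for a simple reason: the paper does not prove this lemma at all. Both Lemma~\ref{lem1} and Lemma~\ref{lem2} are imported verbatim from Brylinski and Brylinski \cite{Bry}, and the authors use them as black boxes in the proof of universality. What you have written is therefore a self-contained replacement for the citation, and it is the standard argument: the closure $\overline{K}$ of the generated subgroup is a closed subgroup, hence by Cartan's closed-subgroup theorem a Lie subgroup with Lie algebra $\mathfrak{k}$; the characterization $\mathfrak{k}=\{X\in\mathfrak{g}\mid \exp(tX)\in\overline{K}\ \forall t\}$ gives $\mathfrak{h},\mathfrak{h}'\subseteq\mathfrak{k}$, so $\mathfrak{k}$ contains the subalgebra they generate, which is $\mathfrak{g}$ by hypothesis; and since the identity component $\overline{K}_0$ is then a connected Lie subgroup with full Lie algebra inside the connected group $U(d)$, it equals $U(d)$, forcing $\overline{K}=U(d)$. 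Each step is sound (the appeal to connectedness of $H$ in the second step is slightly more than is needed --- the one-parameter-subgroup characterization already suffices once $H$ is a closed subgroup --- but this is harmless). Your closing remark is also accurate and worth having on record: compactness of $U(d)$ plays no role in this lemma; it is exactly the ingredient that Lemma~\ref{lem1} consumes to pass from dense generation to actual generation, which is how the paper splits the labor between the two cited results.
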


We choose the following matrices \cite{Yao} as a basis of the Lie algebra $\mathfrak{g}$:

\[\sigma_{x}^{(jk)} (0\leq j< k\leq d-1),  \sigma_{y}^{(jk)} (0\leq j< k\leq d-1),
\sigma_{z}^{(jk)} (j=0, 1\leq k \leq d-1), i I_d\]
where
\[\sigma_{x}^{(jk)}=i\ket{j}\bra{k}+i \ket{k}\bra{j}, \sigma_{y}^{(jk)}=\ket{j}\bra{k}-\ket{k}\bra{j},
\sigma_{z}^{(jk)}=i\ket{j}\bra{j}-i \ket{k}\bra{k}\].

It is easily checked that
\[
\mathfrak{h}=\left\{\left. \sum^{d-1}_{j=0} i \alpha_j \ket{j}\bra{j}\right |\alpha_j\in \mathbb{R}\right\}= span
\left\{ \sigma_z^{(0j)}(1\leq
j\leq d-1), iI_d\right\}
\]

\[
\mathfrak{h}'=\left\{ \left.\sum^{d-1}_{j=0} i \alpha_j V \ket{j}\bra{j}V^{-1}\right |\alpha_j\in \mathbb{R}\right\}= span
\left\{V\sigma_z^{(0j)}V^{-1}(1\leq j\leq d-1), iI_d\right\}
\]

\begin{theorem}\label{the1}
Let $\mathfrak{m}$ be the Lie subalgebra of $\mathfrak{g}$ generated by $\mathfrak{h}$ and $\mathfrak{h}'$.
Then all the $\sigma_x^{(jk)}(0\leq j<k\leq d-1)$ and $\sigma_y^{(jk)}(0\leq j<k\leq d-1)$ are included in $\mathfrak{m}$.
\end{theorem}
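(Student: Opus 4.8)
The plan is to pass to the complexification and reduce the statement to a purely combinatorial fact about elementary matrices. Write $E_{jk}:=\ket{j}\bra{k}$ and let $S=\sum_{n}E_{n+1,n}$ (all indices mod $d$) be the cyclic shift, so that $S^r=\sum_n E_{n+r,n}$. Let $\mathfrak{m}_{\mathbb C}=\mathfrak{m}\otimes_{\mathbb R}\mathbb C\subseteq\mathfrak{gl}(d,\mathbb C)$ be the complexification of $\mathfrak{m}$. Since $\mathfrak{h}\subseteq\mathfrak{m}$ and $\mathfrak{h}$ is spanned over $\mathbb R$ by the matrices $i\,E_{jj}$, the algebra $\mathfrak{m}_{\mathbb C}$ contains every diagonal matrix $E_{jj}$. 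I will show $\mathfrak{m}_{\mathbb C}=\mathfrak{gl}(d,\mathbb C)$. Because $\mathfrak{g}=\mathfrak{u}(d)$ is a real form of $\mathfrak{gl}(d,\mathbb C)$, the decomposition $\mathfrak{gl}(d,\mathbb C)=\mathfrak{u}(d)\oplus i\,\mathfrak{u}(d)$ together with $\mathfrak{m}\subseteq\mathfrak{u}(d)$ gives $\mathfrak{m}_{\mathbb C}\cap\mathfrak{u}(d)=\mathfrak{m}$; hence every anti-Hermitian matrix lying in $\mathfrak{m}_{\mathbb C}$ already lies in $\mathfrak{m}$. In particular $\sigma_x^{(jk)}=i(E_{jk}+E_{kj})$ and $\sigma_y^{(jk)}=E_{jk}-E_{kj}$ then belong to $\mathfrak{m}$, which is the claim (and in fact yields the stronger $\mathfrak{m}=\mathfrak{g}$).

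First I would pin down $\mathfrak{h}'_{\mathbb C}$ concretely. A direct computation with $V=\frac{1}{\sqrt d}\sum_{m,n}\omega^{mn}\ket{m}\bra{n}$ gives $V E_{kk}V^{-1}=\frac1d\sum_{m,n}\omega^{k(m-n)}\ket{m}\bra{n}=\frac1d\sum_{r=0}^{d-1}\omega^{kr}S^{r}$, so the generators of $\mathfrak{h}'$ are $\mathbb C$-combinations of the shift powers $S^{0},\dots,S^{d-1}$. The coefficient matrix $[\omega^{kr}/d]_{k,r}$ is an invertible Fourier matrix, so $\mathfrak{h}'_{\mathbb C}=\mathrm{span}_{\mathbb C}\{S^{r}\}$ is exactly the space of circulant matrices. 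In particular $S=S^{1}\in\mathfrak{m}_{\mathbb C}$, and by the previous paragraph all diagonal $E_{jj}\in\mathfrak{m}_{\mathbb C}$ as well.

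The engine is the bracket of a diagonal matrix with the shift, $[E_{jj},S^{r}]=E_{j,j-r}-E_{j+r,j}$. Specializing to $r=1$ and writing $v_a:=E_{a,a-1}$, the brackets $[E_{jj},S]=v_j-v_{j+1}$ are precisely the consecutive differences of the family $\{v_a\}_{a=0}^{d-1}$, while $S=\sum_a v_a$ is its total sum. Since $a\mapsto a+1$ is a single $d$-cycle, these differences span the sum-zero subspace of the family, and adjoining the global sum $S$ isolates each individual $v_a=E_{a,a-1}\in\mathfrak{m}_{\mathbb C}$. From this cyclic subdiagonal I would then bootstrap: $[E_{a,a-1},E_{a-1,a-2}]=E_{a,a-2}$ (for $d>2$ the second term of the bracket vanishes, and $d=2$ already delivers both off-diagonal matrices directly), and iterating yields every off-diagonal $E_{jk}$ with $j\neq k$. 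Together with the diagonal matrices already in hand, this gives $\mathfrak{m}_{\mathbb C}=\mathfrak{gl}(d,\mathbb C)$.

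The step I expect to require the most care is the isolation of the individual $E_{a,a-1}$: it is exactly the coprimality $\gcd(1,d)=1$—equivalently, that the shift permutes the relevant index family as a single cycle—that makes the sum-zero differences together with the global sum $S$ span the entire family. Had one instead tried to start from a power $S^{r}$ with $\gcd(r,d)>1$, the analogous differences would only connect indices within each of the $\gcd(r,d)$ orbits and would fail to isolate the generators, so it is essential that $S=S^{1}$ is available. The remaining ingredients—the identity $[E_{jj},S^{r}]=E_{j,j-r}-E_{j+r,j}$, the elementary-matrix bracket rules, and the descent $\mathfrak{m}_{\mathbb C}\cap\mathfrak{u}(d)=\mathfrak{m}$—are routine.
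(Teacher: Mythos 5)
Your proposal is correct, but it follows a genuinely different route from the paper's proof. The paper works entirely inside the real Lie algebra $\mathfrak{u}(d)$ with the given basis: it computes $V\sigma_z^{(0t)}V^{-1}$ explicitly, sums over $t$ to obtain $\chi=\sum_{0\leq j<k\leq d-1}\sigma_x^{(jk)}\in\mathfrak{m}$, and then runs a cascade of explicit bracket computations of $\chi$ and its descendants against the $\sigma_z^{(0u)}$, first isolating $2d\,\sigma_y^{(0u)}$, then the $\sigma_y^{(uv)}$, and finally recovering the $\sigma_x^{(jk)}$ from the $\sigma_y$--$\sigma_z$ commutation relations. You instead complexify, replacing the $\sigma$-basis by elementary matrices $E_{jk}$: Fourier inversion identifies $\mathfrak{h}'_{\mathbb C}$ with the full circulant algebra $\mathrm{span}_{\mathbb C}\{S^r\}$ (the paper, in effect, only ever uses the single circulant $\sum_t V\sigma_z^{(0t)}V^{-1}$), the brackets $[E_{jj},S]$ plus the total sum $S$ isolate the subdiagonal $E_{a,a-1}$ via the connectedness of the $d$-cycle, and iterated brackets of subdiagonals give all off-diagonal $E_{jk}$; the standard real-form descent $\mathfrak{m}_{\mathbb C}\cap\mathfrak{u}(d)=\mathfrak{m}$ (valid since $\mathfrak{u}(d)\cap i\,\mathfrak{u}(d)=0$) then returns you to $\mathfrak{m}$. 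Each step checks out, including the two points that need care: the invertibility of the Fourier coefficient matrix $[\omega^{kr}]_{k,r}$, and the fact that $\gcd(1,d)=1$ makes the consecutive differences span the sum-zero subspace. What your approach buys is a structurally cleaner argument with lighter index bookkeeping and a stronger conclusion, namely $\mathfrak{m}_{\mathbb C}=\mathfrak{gl}(d,\mathbb C)$ and hence $\mathfrak{m}=\mathfrak{g}$ outright (the paper only gets $\mathfrak{m}=\mathfrak{g}$ after the theorem, by adjoining $\mathfrak{h}$ to the $\sigma_x,\sigma_y$ it has produced); what the paper's approach buys is that it never leaves the compact real form and uses nothing beyond the commutation table of the $\sigma$-matrices, at the cost of a longer and more error-prone chain of explicit bracket manipulations.
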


\begin{proof}
$\forall t \in \{ 1,..., d-1\}$, $V\sigma^{(0t)}_{z}V^{-1}\in \mathfrak{m}$.
\[
V\sigma^{(0t)}_{z}V^{-1}=\left(\frac{1}{\sqrt{d}} \sum^{d-1}_{j,k=0} \omega^{jk}\ket{j}\bra{k} (i\ket{0}\bra{0}-
i \ket{t}\bra{t})\left( \frac{1}{\sqrt{d}} \sum^{d-1}_{j_1,k_1=0} \bar{\omega}^{j_1k_1}\ket{k_1}\bra{j_1}
\right)
\right)=\frac{i}{d}\sum^{d-1}_{j,j_1=0}(1-\omega^{(j-j_1)t}\ket{j}\bra{j_1})
\]

Thus $\sum^{d-1}_{t=1}V\sigma^{(0t)}_{z}V^{-1}\in \mathfrak{m}$. By direct calculation,
\[\chi:= \sum^{d-1}_{t=1}V\sigma^{(0t)}_{z}V^{-1}=\sum^{d-1}_{t=1}\frac{i}{d}\left(\sum^{d-1}_{j,j_1=0}
(1-\omega^{(j-j_1)t}\ket{j}\bra{j_1}
\right)=\sum^{(d-1)}_{j,j_1=0,j\neq j_1}i\ket{j}\bra{j_1}=\sum_{0\leq j<k\leq d-1}\sigma_x^{(jk)}\in \mathfrak{m}
\]

We give the Lie products between $\sigma_x$, $\sigma_y$ and $\sigma_z$ as follows.
\begin{center}
\begin{displaymath}
 \left\{ \begin{array}{cccc}
  &[\sigma_x^{(0t)},\sigma_z^{(0t)}] &=& 2\sigma_y^{(0t)} \\
  &[\sigma_x^{(0k)},\sigma_z^{(0t)}] &=& \sigma_y^{(0k)}, k\neq t \\
  &[\sigma_x^{(tk)},\sigma_z^{(0t)}] &=& -\sigma_y^{(tk)}, 0<t\neq k \\
  &[\sigma_x^{(jt)},\sigma_z^{(0t)}] &=& \sigma_y^{(jk)}, 0<j\neq t
\end{array} \right.\quad (1)
\end{displaymath}

\begin{displaymath}
 \left\{ \begin{array}{cccc}
  &[\sigma_y^{(0k)},\sigma_z^{(0k)}] &=& -2\sigma_x^{(0k)} \\
  &[\sigma_y^{(0k)},\sigma_z^{(0t)}] &=& -\sigma_x^{(kt)}, k\neq t \\
  &[\sigma_y^{(jk)},\sigma_z^{(0j)}] &=& -\sigma_x^{(jk)}, 0<j\neq k \\
  &[\sigma_y^{(jk)},\sigma_z^{(0k)}] &=& \sigma_x^{(jk)}, 0<j\neq k
\end{array} \right.\quad (2)
\end{displaymath}
\end{center}

From the Lie products listed above, we have
\[[\chi,\sigma_z^{(0t)}]=\sum_{k=1}^{d-1}\sigma_y^{(0k)}+\sum_{k=0}^{d-1}\sigma_y^{(kt)}\in\mathfrak{m}, \forall
t \in \{1,...,d-1\}.
\]

\begin{displaymath}
 \forall u \in \{1,...,d-1\}, [\sum_{k=1}^{d-1}\sigma_y^{(0k)}+\sum_{k=0}^{d-1}\sigma_y^{(kt)}, \sigma_z^{(0u)}]
=\left\{ \begin{array}{cc}
 -\sigma_x^{0u}-\sum_{k=0,k\neq u}^{d-1}\sigma_x^{(ku)},& t\neq u\\
  -2\sigma_x^{(0u)}-2\sum_{k=0,k\neq u}^{d-1}\sigma_x^{(ku)},& t= u\\
\end{array} \right.\quad
\end{displaymath}

Therefore, $\sigma_{x}^{(0u)}+\sum_{k=0,k\neq u}^{d-1}\sigma_x^{(ku)}\in \mathfrak{m}, \forall
u \in \{1,...,d-1\}.$

Furthermore, for $u,v \in \{1,...,d-1\}$,
\begin{displaymath}
[\sigma_{x}^{(0u)}+\sum_{k=0,k\neq u}^{d-1}\sigma_x^{(ku)},\sigma_{z}^{(0v)}]=
\left\{ \begin{array}{cc}
2\sigma_y^{(0u)}-\sigma_y^{(uv)}\in \mathfrak{m}, & v\neq u\\
4\sigma_y^{(0u)}+\sum_{k=1,k\neq u}^{d-1}\sigma_y^{(ku)}\in \mathfrak{m}, & v= u
\end{array} \right.
\end{displaymath}

Thus \[\sum_{v=1,v\neq u}^{d-1}\left(2\sigma_y^{(0u)}-\sigma_y^{(vu)}\right)=2(d-2)\sigma_y^{(0u)}-
\sum_{v=1,v\neq u}^{d-1}\sigma_y^{(vu)}\in \mathfrak{m}\]
\[\left(2(d-2)\sigma_y^{(0u)}-
\sum_{v=1,v\neq u}^{d-1}\sigma_y^{(vu)}\right)+4 \sigma_y^{(0u)}+\sum_{k=1,k\neq u}^{d-1}\sigma_y^{(ku)}=2d
\sigma_y^{(0u)}\in \mathfrak{m}
\]

i.e., $\sigma_y^{(0u)}\in \mathfrak{m},\forall u\in \{1,...,d-1\}$.

Immediately, we get $\sigma_y^{(vu)}\in \mathfrak{m},\forall u\neq v, u,v \in \{1,...,d-1\}$.

Up to now, all the $\sigma_y^{(jk)}(0\leq j< k \leq d-1)$ are included in $\mathfrak{m}$. Still from the Lie
products listed in (\ref{equ2}), we know that all the $\sigma_x^{(jk)}(0\leq j< k \leq d-1)$ are included in $\mathfrak{m}$.
\end{proof}

Theorem (\ref{the1}) means that $\mathfrak{h}$ and $\mathfrak{h'}$ generate the Lie algebra $\mathfrak{g}$.
It follows from proposition(\ref{pro1}), lemma (\ref{lem1}) and lemma (\ref{lem2}) that $H$ and $H'$ generate
$U(d)$, which means qudit ZX Calculus contains all single qudit unitary transformations. Therefore the qudit ZX calculus is universal for quantum mechanics.

\section{Conclusion and Future Work}
In this paper, we introduce a dichromatic calculus (RG) for qutrit systems. We show that the decomposition of the qutrit Hadamard gate is non-unique and not derivable from the dichromatic calculus. As an application of the dichromatic calculus, we depict a quantum algorithm with a single qutrit. Furthermore, for any $d$, we prove that the qudit ZX calculus contains all single qudit unitary transformations. It follows that qudit ZX calculus, with qutrit dichromatic calculus as a special case, is universal for quantum mechanics.

There are many issues requiring further exploration. Here we just list a few of them as follows. First, does there exist a formula in which each unitary is decomposed into X and Z phase gates? Second, is the dichromatic ZX calculus complete for qutrit stabilizer quantum mechanics?  Finally, is the dichromatic ZX calculus incomplete for qutrit quantum mechanics?


\nocite{*}
\bibliographystyle{eptcs}
\bibliography{DiCal_arXiv}

\begin{thebibliography}{1}
\providecommand{\bibitemdeclare}[2]{}
\providecommand{\surnamestart}{}
\providecommand{\surnameend}{}
\providecommand{\urlprefix}{Available at }
\providecommand{\url}[1]{\texttt{#1}}
\providecommand{\href}[2]{\texttt{#2}}
\providecommand{\urlalt}[2]{\href{#1}{#2}}
\providecommand{\doi}[1]{doi:\urlalt{http://dx.doi.org/#1}{#1}}
\providecommand{\bibinfo}[2]{#2}

\bibitemdeclare{article}{wangbian}
\bibitem{wangbian}
\bibinfo{author}{Xiaoning \surnamestart Bian\surnameend} \&
  \bibinfo{author}{Quanlong \surnamestart Wang\surnameend}
  (\bibinfo{year}{2014}): \emph{\bibinfo{title}{Graphical calculus for qutrit
  systems}}.
\newblock {\sl \bibinfo{journal}{accepted for publication in the IEICE
  Transactions}}.

\bibitemdeclare{article}{Bry}
\bibitem{Bry}
\bibinfo{author}{Jean-Luc \surnamestart Brylinski\surnameend} \&
  \bibinfo{author}{Ranee \surnamestart Brylinski\surnameend}
  (\bibinfo{year}{2002}): \emph{\bibinfo{title}{Universal quantum gates}}.
\newblock {\sl \bibinfo{journal}{Mathematics of Quantum Computation}}
  \bibinfo{volume}{79}, \doi{10.1201/9781420035377.pt2}.

\bibitemdeclare{article}{CoeckeDuncan}
\bibitem{CoeckeDuncan}
\bibinfo{author}{Bob \surnamestart Coecke\surnameend} \& \bibinfo{author}{Ross
  \surnamestart Duncan\surnameend} (\bibinfo{year}{2011}):
  \emph{\bibinfo{title}{Interacting quantum observables: categorical algebra
  and diagrammatics}}.
\newblock {\sl \bibinfo{journal}{New Journal of Physics}}
  \bibinfo{volume}{13}(\bibinfo{number}{4}), p. \bibinfo{pages}{043016},
  \doi{10.1088/1367-2630/13/4/043016}.

\bibitemdeclare{article}{Yao}
\bibitem{Yao}
\bibinfo{author}{Yao-Min \surnamestart Di\surnameend} \&
  \bibinfo{author}{Hai-Rui \surnamestart Wei\surnameend}
  (\bibinfo{year}{2011}): \emph{\bibinfo{title}{Elementary gates for ternary
  quantum logic circuit}}.
\newblock {\sl \bibinfo{journal}{arXiv preprint arXiv:1105.5485}}.

\bibitemdeclare{article}{RossPerdrix}
\bibitem{RossPerdrix}
\bibinfo{author}{Ross \surnamestart Duncan\surnameend} \&
  \bibinfo{author}{Simon \surnamestart Perdrix\surnameend}
  (\bibinfo{year}{2013}): \emph{\bibinfo{title}{Pivoting makes the ZX-calculus
  complete for real stabilizers}}.
\newblock {\sl \bibinfo{journal}{arXiv preprint arXiv:1307.7048}}.

\bibitemdeclare{article}{Gedik}
\bibitem{Gedik}
\bibinfo{author}{Z.~\surnamestart Gedik\surnameend} (\bibinfo{year}{2014}):
  \emph{\bibinfo{title}{Computational Speed-up with a Single Qutrit}}.
\newblock {\sl \bibinfo{journal}{arXiv:1403.5861}}.

\bibitemdeclare{article}{Muth}
\bibitem{Muth}
\bibinfo{author}{Ashok \surnamestart Muthukrishnan\surnameend} \&
  \bibinfo{author}{CR~\surnamestart Stroud~Jr\surnameend}
  (\bibinfo{year}{2000}): \emph{\bibinfo{title}{Multivalued logic gates for
  quantum computation}}.
\newblock {\sl \bibinfo{journal}{Physical Review A}}
  \bibinfo{volume}{62}(\bibinfo{number}{5}), p. \bibinfo{pages}{052309},
  \doi{10.1103/PhysRevA.62.052309}.

\bibitemdeclare{article}{Ranchin}
\bibitem{Ranchin}
\bibinfo{author}{Andr{\'e} \surnamestart Ranchin\surnameend}
  (\bibinfo{year}{2014}): \emph{\bibinfo{title}{Depicting qudit quantum
  mechanics and mutually unbiased qudit theories}}.
\newblock {\sl \bibinfo{journal}{arXiv preprint arXiv:1404.1288}}.

\end{thebibliography}
\end{document}